\newtheorem{theorem}{Theorem}[section]
\newtheorem{lemma}[theorem]{Lemma}
\newtheorem{observation}[theorem]{Observation}
\newtheorem{corollary}[theorem]{Corollary}
\newcommand*{\dual}[1]{\widetilde{#1}}
\theoremstyle{definition}
\newtheorem{definition}[theorem]{Definition}
\newcommand\includegratility[2][]{\includegraphics[scale=0.5,#1]{#2}}
\def\makecell#1{{\def\arraystretch{1}\begin{tabular}{@{}c@{}}#1\end{tabular}}}
 \gdef\xxxmark{%
   \expandafter\ifx\csname @mpargs\endcsname\relax 
     \expandafter\ifx\csname @captype\endcsname\relax 
       \marginpar{xxx}
     \else
       xxx 
     \fi
   \else
     xxx 
   \fi}
 \gdef\xxx{\@ifnextchar[\xxx@lab\xxx@nolab}
 \long\gdef\xxx@lab[#1]#2{\textbf{[\xxxmark #2 ---{\sc #1}]}}
 \long\gdef\xxx@nolab#1{\textbf{[\xxxmark #1]}}
 \long\gdef\xxx@lab[#1]#2{}\long\gdef\xxx@nolab#1{}%
\definecolor{header}{rgb}{0.29,0,0.51}
\definecolor{gray}{rgb}{0.85,0.85,0.85}
\definecolor{hard}{rgb}{1,0.85,0.85}
\definecolor{open}{rgb}{1,1,0.85}
\definecolor{easy}{rgb}{0.85,0.85,1}
\def\header#1{\multicolumn{1}{|c|}{\cellcolor{header}\textcolor{white}{\textbf{#1}}}}
\def\headerr#1{\multicolumn{1}{|r|}{\cellcolor{header}\textcolor{white}{\textbf{#1}}}}
\def\zebra{\rowcolors{2}{gray!75}{white}}
\def\HARD{\cellcolor{hard}}
\def\EASY{\cellcolor{easy}}
\newcommand*{\defn}[1]{\textbf{\textit{\boldmath{#1}}}}
\begin{document}

\title{ASP-Completeness of Hamiltonicity in Grid Graphs, \\
  with Applications to Loop Puzzles}

\author{%
  MIT Hardness Group%
    \thanks{Artificial first author to highlight that the other authors (in
      alphabetical order) worked as an equal group. Please include all
      authors (including this one) in your bibliography, and refer to the
      authors as “MIT Hardness Group” (without “et al.”).}
\and
  Josh Brunner%
    \thanks{MIT Computer Science and Artificial Intelligence Laboratory,
      32 Vassar St., Cambridge, MA 02139, USA, \protect\url{{brunnerj,lkdc,edemaine,diomidova,della,tockman}@mit.edu}}
\and
  Lily Chung\footnotemark[2]
\and
  Erik D. Demaine\footnotemark[2]
\and
  Jenny Diomidova\footnotemark[2]
\and
  Della Hendrickson\footnotemark[2]
\and
  Andy Tockman\footnotemark[2]
}

\date{}

\maketitle

\begin{abstract}
  We prove that Hamiltonicity in maximum-degree-3 grid graphs
  (directed or undirected) is ASP-complete,
  i.e., it has a parsimonious reduction from every NP search problem
  (including a polynomial-time bijection between solutions).
  As a consequence, given $k$ Hamiltonian cycles,
  it is NP-complete to find another; and
  counting Hamiltonian cycles is \#P-complete.
  If we require the grid graph's vertices to form a full $m \times n$ rectangle,
  then we show that Hamiltonicity remains ASP-complete if the
  edges are directed or if we allow removing some edges
  (whereas including all undirected edges is known to be easy).
  These results enable us to develop a stronger ``T-metacell'' framework for
  proving ASP-completeness of rectangular puzzles, which requires building
  just a single gadget representing a degree-3 grid-graph vertex.
  We apply this general theory to prove ASP-completeness of
  38 pencil-and-paper puzzles where the goal is to draw a loop
  subject to given constraints:
  Slalom, Onsen-meguri, Mejilink, Detour, Tapa-Like Loop, Kouchoku, Icelom;
  Masyu, Yajilin, Nagareru, Castle Wall, Moon or Sun, Country Road, Geradeweg,
  Maxi Loop, Mid-loop, Balance Loop, Simple Loop, Haisu,
  Reflect Link, Linesweeper;
  Vertex/Touch Slitherlink, Dotchi-Loop, Ovotovata, Building Walk, Rail Pool,
  Disorderly Loop, Ant Mill, Koburin, Mukkonn Enn, Rassi Silai, (Crossing) Ichimaga, Tapa,
  Canal View, Aqre, and Paintarea.
  The last 14 of these puzzles were not even known to be NP-hard.
  Along the way, we prove ASP-completeness of some simple forms of
  Tree-Residue Vertex-Breaking (TRVB), including planar multigraphs with
  degree-6 breakable vertices, or with degree-4 breakable and
  degree-1 unbreakable vertices.
\end{abstract}

\section{Introduction}

Hamiltonicity is one of the core NP-complete problems,
used as the basis for countless NP-hardness reductions.
It accounts for two of Karp's 21 NP-complete problems
\cite{Karp-1972}: directed and undirected Hamiltonian cycle.
It has been shown to remain NP-complete for many restricted graph classes:
undirected maximum-degree-3 graphs \cite{Garey-Johnson-Stockmeyer-1974},
undirected bipartite graphs \cite{Krishnamoorthy-1975},
undirected 3-connected 3-regular bipartite graphs
\cite{Akiyama-Nishizeki-Saito-1980},
undirected 2-connected 3-regular bipartite planar graphs
\cite{Akiyama-Nishizeki-Saito-1980},
undirected 3-connected 3-regular planar graphs of minimum face degree $5$
\cite{Garey-Johnson-Tarjan-1976},
directed planar graphs with indegree and outdegree at most $2$ and total
degree at most $3$ \cite{plesnik},
and so on.

One of the most useful special cases of Hamiltonicity is (square)
\defn{grid graphs}: graphs whose vertices are a subset of the 2D integer
lattice, with an edge connecting two vertices exactly when they have
distance~$1$.
Itai, Papadimitriou, and Szwarcfiter \cite{undir_rect}
proved that Hamiltonicity is NP-complete in grid graphs.
Papadimitriou and Vazirani \cite{papadimitriou}
improved this result by proving Hamiltonicity NP-complete
in grid graphs of maximum degree~$3$.
Together, these results strengthen most of the special graph classes
mentioned above (as grid graphs are necessarily planar and bipartite),
with a stronger geometric guarantee.
Other papers extend these results to other 2D grids
\cite{Arkin,Rudoy-grid,Hou-Lynch-2018}.
Hamiltonicity in grid graphs is the foundation for NP-hardness proofs of
countless games and puzzles, from video games
\cite{Forisek-2010,Portal_FUN2018,Witness_TCS}
to pencil-and-paper puzzles
\cite{Yato-2000,Andersson-2009},
as well as practical problems such as lawn mowing and milling
\cite{Arkin-Fekete-Mitchell-2000,Milling}.

But what about \defn{parsimonious} reductions
that preserve the number of solutions?
A particularly strong form of this notion is ASP-completeness:
an NP search problem $P$ is \defn{ASP-complete} \cite{Yato-Seta-2003}
if there is a polynomial-time reduction
from every NP search problem $Q$ to $P$ along with
a polynomial-time bijection converting every solution of $P$
to a unique solution of $Q$ and vice versa.
If $P$ is ASP-complete, then the decision version of $P$ is NP-complete,
counting solutions to $P$ is \#P-complete,
and the \defn{$k$-ASP} $P$ problem --- given an instance of $P$ and
$k$ solutions, find another solution --- is NP-complete for any $k \geq 0$
\cite{Yato-Seta-2003}.

Only a few versions of Hamiltonicity are known to be ASP-complete,
or weaker, \#P-complete.
Li\'skiewicz, Ogihara, and Toda \cite{Liskiewicz2003}
proved \#P-completeness of Hamiltonicity in
undirected 3-regular planar graphs (based on \cite{Garey-Johnson-Tarjan-1976}).
Seta \cite{Seta02thecomplexities} proved ASP-completeness of
Hamiltonicity in undirected maximum-degree-3 planar graphs
(based on \cite{plesnik}).
Bosboom et al.~\cite{LessThanEdgeMatching_JIP} proved ASP-completeness of
Hamiltonicity in \emph{directed} 3-regular (indegree 2 and outdegree 1
or vice versa) planar graphs (based on \cite{plesnik}).
But what about grid graphs?

\subsection{Our Results}

In this paper, we prove that Hamiltonicity in maximum-degree-3 grid graphs
is ASP-complete.  Thus this popular problem can serve as a foundation
for ASP-completeness proofs as well.
The same result holds for Hamiltonicity in \emph{directed} maximum-degree-3
grid graphs, where each edge has a specified direction.
As mentioned above, grid graphs are bipartite and planar,
so these results roughly strengthen the ASP-completeness results mentioned
above, except that
we can guarantee ``maximum-degree-3'' but not ``3-regular''.
(No grid graphs are 3-regular; consider the top-left corner.
Furthermore, undirected 3-regular graphs have an even number of Hamiltonian cycles
by Smith's Theorem \cite{Tutte-1946}, so we cannot hope for ASP-completeness
in this case: the 1-ASP decision problem is trivial,
while the 1-ASP construction problem is in PPA \cite{Papadimitriou-1994}.)

The basis for this result is another form of Hamiltonicity called
\defn{Tree-Residue Vertex-Breaking (TRVB)} \cite{trvb},
previously used to analyze Hamiltonicity in grid graphs \cite{Rudoy-grid}.
In TRVB, we are given a graph where some vertices are breakable,
and the goal is to \defn{break} a subset of the breakable vertices ---
replacing each broken degree-$k$ vertex with $k$ degree-$1$ vertices ---
to make the graph into a tree.
This problem has a known characterization of what degrees of breakable or
unbreakable vertices make the problem polynomial vs.\ NP-complete \cite{trvb}.
We prove that several forms of TRVB are in fact ASP-complete,
including planar multigraphs with degree-6 breakable vertices,
and planar multigraphs with degree-4 breakable and
degree-1 unbreakable vertices.

We also study even more geometric forms of grid-graph Hamiltonicity.
Suppose instead of allowing an arbitrary set of vertices on the square grid,
we require the vertex set to be an entire $m \times n$ rectangle of
integer points.
Such graphs are known as \defn{rectangular grid graphs} \cite{undir_rect}.
In this case, undirected Hamiltonicity is known to be easy \cite{undir_rect}.
But we show that \emph{directed} Hamiltonicity in rectangular grid graphs
is ASP-complete.
Alternatively, if the graph is undirected but we allow removing some edges
(but not vertices)
from the rectangular grid --- a spanning subgraph of a rectangular grid graph
--- then Hamiltonicity is also ASP-complete.
Table~\ref{tbl:summary} summarizes these results.

\begin{table}
  \centering
  \renewcommand{\arraystretch}{1.5}
  \def\scale{0.85}
  \begin{tabular}{|r|c|c|c|}
    \cline{2-4}
    \multicolumn{1}{r}{}
    & \header{Rectangular} &
    \header{\makecell{Max-degree-3 spanning \\ subgraph of rectangular}} &
    \header{Max-degree-3} \\ \hline
    \headerr{Undirected} &
    \EASY P \cite{undir_rect} &
    \HARD ASP-complete [\S\ref{sec:spanning}] &
    \HARD ASP-complete [\S\ref{sec:induced}]
    \\
    \cellcolor{header}
    & \EASY \includegraphics[scale=\scale]{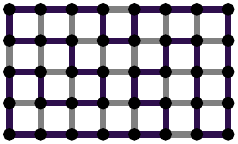}
    & \HARD \includegraphics[scale=\scale]{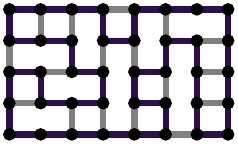}
    & \HARD \includegraphics[scale=\scale]{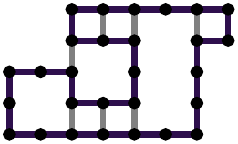}
    \\ \hline
    \headerr{Directed} &
    \HARD ASP-complete [\S\ref{sec:rectangular}] &
    \HARD ASP-complete [\S\ref{sec:spanning}] &
    \HARD ASP-complete [\S\ref{sec:induced}]
    \\
    \cellcolor{header}
    & \HARD \includegraphics[scale=\scale]{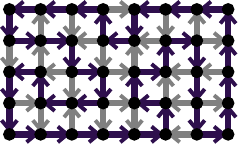}
    & \HARD \includegraphics[scale=\scale]{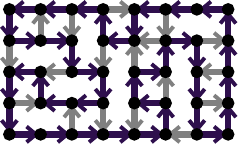}
    & \HARD \includegraphics[scale=\scale]{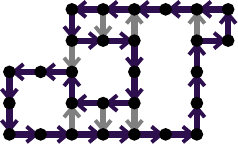}
    \\ \hline
  \end{tabular}
  \caption{
    Complexity of Hamiltonicity in various types of grid graphs.
    Each cell shows an example of a Hamiltonian graph of the specified type,
    with a darkened Hamiltonian cycle.
    The first and third column concern true grid graphs,
    where there is an edge between each pair of vertices at distance~$1$.
    In the first and second columns, the vertices form exactly
    an $m \times n$ rectangle, whereas the third column allows an
    induced subgraph of a rectangular grid graph.
    The middle column concerns graphs constructed from a rectangular grid graph
    by removing some edges (but no vertices) so that each
    vertex has degree at most~$3$.
    The second and third columns have maximum degree~$3$.
  }
  \label{tbl:summary}
\end{table}

Rectangular grid graphs are useful because many (if not most)
pencil-and-paper puzzles take place on a full rectangular grid.
In particular, the \defn{T-metacell framework} of Tang \cite{tang}
shows how NP-hardness for a pencil-and-paper puzzle
often follows from building a single gadget, essentially representing a
degree-3 vertex that must be visited at least once.
In Section~\ref{sec:T-metacells}, we extend this framework to
prove ASP-completeness as well.
We also extend the framework to allow for T-metacells where
some exits are directed (usable in only one direction)
and up to one exit is forced (must be used).
In some cases, we need to build more than one T-metacell
to handle different orientations of directions and/or forced edges.

Finally, in Section~\ref{sec:Applications}, we apply this framework
to prove ASP-completeness of 38 pencil-and-paper puzzles,
listed in Table~\ref{tbl:puzzles}.
Five of these results use the same reduction from \cite{tang},
while the remainder involve creating new T-metacell gadget(s).
For fourteen of the analyzed puzzles, even our NP-hardness result is new.

\begin{table}
  \centering
  \renewcommand{\arraystretch}{1.5}
  \zebra
  \begin{tabular}{|m{3.4in}r|ccc|}
    \hline
    \header{Games} & \header{\#} & \header{\makecell{New ASP-\\Hardness}} & \header{\makecell{New\\Reduction}} & \header{\makecell{New NP-\\Hardness}}
    \\\hline
    Slalom/Suraromu \cite{Kanehiro,tang}, Onsen-meguri \cite{tang}, Mejilink \cite{tang}, Detour \cite{tang-old,tang}, Tapa-Like Loop \cite{tang}, Kouchoku \cite{tang}, Icelom \cite{tang}
    & 7 & yes & no & no
    \\\hline
    Masyu \cite{Friedman-masyu,tang}, Yajilin \cite{Ishibashi,tang}, Nagareru \cite{Iwamoto-moon-nagare-nuri,tang}, Castle Wall \cite{tang}, Moon or Sun \cite{Iwamoto-moon-nagare-nuri,tang}, Country Road \cite{Ishibashi,tang}, Geradeweg \cite{tang},
    Maxi Loop \cite{tang}, Mid-loop \cite{tang}, Balance Loop \cite{tang}, Simple Loop \cite{undir_rect,tang}, Haisu \cite{tang-old,tang},
    Reflect Link \cite{tang}, Linesweeper \cite{maarse}
    & 14 & yes & yes & no
    \\\hline
    Vertex/Touch Slitherlink, Dotchi-Loop, Ovotovata, Building Walk, Rail Pool,
    Disorderly Loop, Ant Mill, Koburin, Mukkonn Enn, Rassi Silai, (Crossing) Ichimaga, Tapa,
    Canal View, Aqre, Paintarea
    & 17 & yes & yes & yes
    \\\hline
  \end{tabular}
  \caption{Our results on pencil-and-paper puzzles.
    All ASP-completeness results are new;
    some are via an existing reduction \cite{tang}
    and some are via a new reduction;
    and some puzzles were not even known to be NP-hard.
    (Puzzles known to be NP-hard have corresponding citations.)}
  \label{tbl:puzzles}
\end{table}

\section{Connections Between Problems}

We collect together some useful equivalences between problems on plane graphs, which are variously present in the literature \cite{barnette, trvb}.

\begin{definition}[\cite{trvb}]
  The \defn{Tree-Residue Vertex-Breaking} (TRVB) problem takes place on an undirected multigraph
  with vertices marked as either `breakable' or `unbreakable'.
  The goal is to \emph{break} a subset $S$ of the breakable vertices to leave a tree ---
  to break a vertex of degree $d$,
  replace it with $d$ new leaves attached to its incident edges.
  In other words, the graph obtained from $G$ by subdividing every edge and deleting the vertices in $S$ must be a tree.
\end{definition}

\begin{definition}[\cite{BENT1987, barnette}]
  Given a plane multigraph, a \defn{kiki Euler tour} is a cycle which traverses every edge exactly once, such that any time the cycle enters a vertex via an edge $e$, it leaves by an edge adjacent to $e$ in the cyclic order.%
  \footnote{This notion is one of two definitions of ``nonintersecting'' or ``noncrossing Euler tour''. We avoid this term to avoid confusion with the other definition, where an Euler tour is has a \defn{crossing} if there are four edges $e, e', f, f'$ adjacent to a single vertex so that $e'$ follows $e$ and $f'$ follows $f$ in the tour, and $\{e, e'\}$ alternates with $\{f, f'\}$ in the cyclic order \cite{TsaiWest3Color}.  Noncrossing Euler tours in this sense always exist, whereas kiki is a stricter condition.}
\end{definition}

The following is a well-known result with a long history; see \cite{TsaiWest3Color}.
\begin{theorem}
  \label{thm:tri3}
  Every Eulerian plane graph where every face is a triangle,
  except possibly the exterior face (a ``near-triangulation''),
  has a proper vertex 3-coloring.
\end{theorem}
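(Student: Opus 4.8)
The plan is to exhibit an explicit proper $3$-coloring by a discrete ``tension'' (potential-difference) argument over $\mathbb{Z}_3$, using the Eulerian hypothesis only to two-color the faces. First I would recall the standard planar-duality fact that a connected plane graph is Eulerian if and only if its dual is bipartite; concretely, since every vertex of $G$ has even degree, the faces of $G$ admit a proper $2$-coloring (black/white) in which any two faces sharing an edge receive opposite colors. (The Eulerian hypothesis is exactly what makes the faces around each vertex alternate in color and close up.) I would also note that a connected Eulerian graph is bridgeless --- removing a bridge would leave exactly one odd-degree vertex on one side, violating the handshake lemma --- so every edge borders two distinct faces and the $2$-coloring orients each edge unambiguously.

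Next I would orient every edge so that its incident black face lies on the left, and assign to each oriented edge $e$ the value $\delta(e) = 1 \in \mathbb{Z}_3$. The key local computation is that the boundary of every bounded face is a consistently oriented directed cycle: each black triangle lies on the left of all three of its edges, and each white triangle lies on the right of all three of its edges, so in both cases the three edges form a directed $3$-cycle. Hence $\sum \delta = 1 + 1 + 1 = 3 \equiv 0 \pmod 3$ around every bounded face.

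Then I would integrate $\delta$ into a vertex coloring. Because the bounded faces of a connected plane graph generate its entire cycle space (there are $|E| - |V| + 1$ of them, matching the cycle-space dimension), the relation $\sum_C \delta \equiv 0$ around every bounded face propagates to $\sum_C \delta \equiv 0$ around every cycle $C$. A $\mathbb{Z}_3$-valued tension with vanishing sum around every cycle is the coboundary of a vertex potential: fixing $c(v_0) = 0$ and setting $c(v)$ to be the signed $\delta$-sum along any path from $v_0$ to $v$ is well-defined precisely by this cycle condition. The resulting $c \colon V \to \mathbb{Z}_3$ satisfies $c(\mathrm{head}) - c(\mathrm{tail}) = 1 \neq 0$ on every edge, so it is a proper $3$-coloring.

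The main obstacle is the near-triangulation caveat: the outer face may fail to be a triangle, so the unit-tension trick does not close up around it. The observation that rescues the argument --- and the step I would emphasize --- is that one never needs the outer face at all: the bounded, hence triangular, faces already span the cycle space, so the cycle condition holds automatically no matter what the exterior face looks like. I would also remark that the face $2$-coloring itself is insensitive to the shape of the outer face, so the only place the triangulation hypothesis enters is the per-face sum over the bounded faces. (For multigraphs the argument is unchanged: parallel edges impose the same constraint $c(u) \neq c(v)$, and a loop cannot bound a triangle in an Eulerian graph.)
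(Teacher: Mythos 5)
Your argument is correct, but there is nothing in the paper to compare it against: Theorem~\ref{thm:tri3} is stated as a well-known classical result and is cited to the literature rather than proved, so the paper contains no proof of its own. Your route is the standard one for this fact (it is the planar-dual form of the equivalence between proper $3$-colorings and nowhere-zero $\mathbb{Z}_3$-tensions): use the Eulerian hypothesis to properly $2$-color the faces, orient each edge with its black face on the left, give every edge weight $1 \in \mathbb{Z}_3$, note that each bounded triangular face is then a consistently oriented directed $3$-cycle with weight sum $3 \equiv 0$, and integrate to a vertex potential. Your key observation --- that the exterior face is never needed because the bounded faces alone generate the cycle space of a connected plane graph --- is exactly what makes the near-triangulation case go through. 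Two small points worth making explicit if this were written up: first, you need the bounded face boundaries to generate the cycle space \emph{with signs} over $\mathbb{Z}_3$ (the integral/flow-space statement for connected plane graphs), not merely the usual $\mathrm{GF}(2)$ statement, though this is standard; second, bridgelessness (which you correctly derive from the Eulerian hypothesis) is what guarantees each edge separates a black face from a white face so that the orientation is well defined. Neither is a gap, just a place to be precise.
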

Let $G$ be a connected 3-regular bipartite plane multigraph, and let $\widetilde{G}$ be its plane dual.
By Theorem~\ref{thm:tri3}, $\widetilde{G}$ is 3-colorable; equivalently it is possible to 3-color the faces of $G$ so that adjacent faces have different colors,
where faces are regarded as adjacent if they share an edge.
Note that in such a 3-coloring,
the three faces around a single vertex contain each color exactly once.

Let us fix such a coloring using the colors \{white, blue, yellow\} such that the exterior face is colored white.
Define the following graphs:
\begin{itemize}
\item $G_1$ is the directed plane multigraph obtained from $G$ by orienting every blue face clockwise and every white face counterclockwise.  This fully determines the orientation.
\item $G_2$ is the plane multigraph obtained from $G$ by contracting every yellow face to a single vertex.
\item $G_3$ is the subgraph of $\widetilde{G}$ induced by the non-white vertices.
\end{itemize}

\begin{figure}
  \centering
  \begin{subfigure}[t]{0.3\textwidth}
    \centering
    \includegraphics[width=\textwidth]{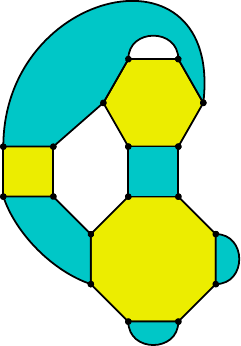}
    \caption{Face 3-coloring of $G$.}
  \end{subfigure}\hfil\hfil
  \begin{subfigure}[t]{0.3\textwidth}
    \centering
    \includegraphics[width=\textwidth]{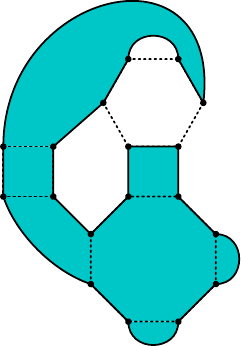}
    \caption{Assignment of colors with blue and white connected.}
  \end{subfigure}\hfil\hfil
  \begin{subfigure}[t]{0.3\textwidth}
    \centering
    \includegraphics[width=\textwidth]{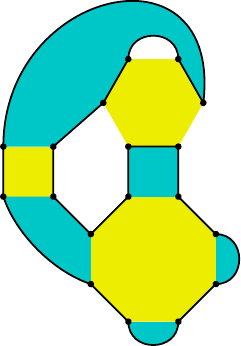}
    \caption{Cycle containing blue faces and not white faces.}
  \end{subfigure}
  \\[2em]
  \begin{subfigure}[t]{0.3\textwidth}
    \centering
    \includegraphics[width=\textwidth]{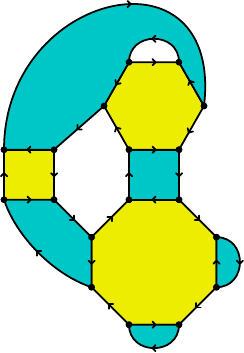}
    \caption{Directed graph $G_1$.}
  \end{subfigure}\hfil\hfil
  \begin{subfigure}[t]{0.3\textwidth}
    \centering
    \includegraphics[width=\textwidth]{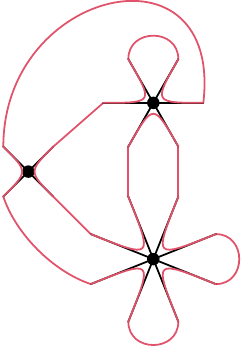}
    \caption{Kiki Euler tour of $G_2$.}
  \end{subfigure}\hfil\hfil
  \begin{subfigure}[t]{0.3\textwidth}
    \centering
    \includegraphics[width=\textwidth]{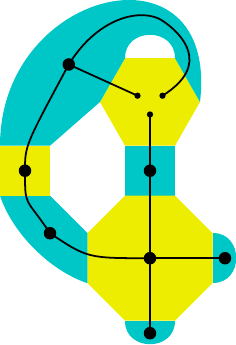}
    \caption{Tree-Residue Vertex-Breaking of $G_3$.}
  \end{subfigure}
  \captionsetup{justification=centering}
  \caption{Illustration of Lemma~\ref{lem:connections}.}
  \label{fig:connections}
\end{figure}

\begin{lemma} \label{lem:connections}
  There are bijections between the following sets:
  \begin{enumerate}[(i)]
  \item Assignments of colors \{white, blue\} to each yellow vertex of $\dual{G}$ such that the white induced subgraph is connected and the blue induced subgraph is also connected.
  \item Hamiltonian cycles of $G$ which contain all blue faces and no white faces.
  \item Hamiltonian cycles of $G$ which use every edge separating white faces from blue faces.
  \item Directed Hamiltonian cycles of $G_1$.
  \item Kiki Euler tours of $G_2$.
  \item Tree-Residue Vertex-Breakings of $G_3$, where yellow vertices are breakable and blue vertices are unbreakable.
  \end{enumerate}
\end{lemma}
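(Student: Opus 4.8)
The plan is to designate the face $2$-colorings of item (i) as the central object and exhibit a direct bijection from it to each of the other five sets; since ``being in bijection'' is transitive, this yields all pairwise bijections at once. Throughout I will use three structural consequences of the hypotheses. First, since the $3$-coloring is proper, every edge of $G$ separates two faces of different colors, so each edge is of type white--blue (wb), white--yellow (wy), or blue--yellow (by). Second, since the three faces around each vertex realize all three colors exactly once, each vertex is incident to exactly one wb edge; hence the wb edges form a perfect matching $M$ of $G$. Third, walking around the boundary of any face the colors of the incident edges alternate between the two colors other than that face's own, so every face has even degree.

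\emph{Core bijection (i)\,$\leftrightarrow$\,(ii)\,$\leftrightarrow$\,(iii).} Given an assignment as in (i), recolor each yellow vertex and thereby partition the faces of $G$ into a white class $W$ (containing the exterior) and a blue class $B$. Let $\partial$ be the set of edges separating a $W$-face from a $B$-face. At each vertex the three faces are recolored to the pattern $\{W,B,W\}$ or $\{W,B,B\}$, so in either case exactly two incident edges lie in $\partial$; thus $\partial$ is a $2$-regular spanning subgraph, a disjoint union of cycles covering every vertex. Moreover the unique wb edge at each vertex always lies in $\partial$, so $\partial\supseteq M$ and $\partial$ consists of $M$ plus one extra (wy or by) edge per vertex. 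By planar duality the connectivity conditions of (i) say exactly that the regions $\bigcup W$ and $\bigcup B$ are each connected, which on the sphere holds iff $\partial$ is a single cycle, i.e.\ a Hamiltonian cycle. This cycle encloses exactly the $B$-faces, hence all blue and no white faces (matching (ii)), and it contains $M$, i.e.\ every wb edge (matching (iii)). Conversely any Hamiltonian cycle as in (ii) or (iii) defines a face $2$-coloring (inside $=B$, outside $=W$) fixing blue and white faces and assigning each yellow face; being a single cycle forces both regions connected, recovering (i). The only nonroutine point here is the topological equivalence ``both regions connected $\Leftrightarrow$ single boundary cycle,'' which follows from a standard Euler-characteristic count on the sphere.

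\emph{Re-encodings (iv), (v).} The orientation of $G_1$ is consistent precisely because a wb edge receives the same direction from its clockwise blue face and its counterclockwise white face. A short local computation at a vertex shows that its three edges carry the pattern (two in, one out) or (one in, two out), with the wb edge always the lone oppositely-directed one; consequently every directed Hamiltonian cycle of $G_1$ is forced to use the wb edge at each vertex, and orienting any type-(iii) cycle so that the blue region lies to one fixed side makes blue faces clockwise and white faces counterclockwise, giving the unique compatible direction. This yields (iii)\,$\leftrightarrow$\,(iv). For (v), contracting each yellow face deletes its alternating (wy/by) boundary edges and merges its boundary vertices, so $G_2$ has the yellow faces as vertices, the wb edges (i.e.\ $M$) as edges, and rotation at a yellow-face vertex inherited from the cyclic order around that face. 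A type-(iii) Hamiltonian cycle uses all of $M$ and, at each yellow face, hops from one boundary vertex to a neighbor along a single boundary edge; after contraction this reads as entering via one wb edge and leaving via a cyclically adjacent one, which is exactly the kiki condition, and connectivity matches ``single tour.'' This yields (iii)\,$\leftrightarrow$\,(v).

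\emph{TRVB (vi).} Finally I would match a break-set to a coloring by declaring a yellow vertex of $\dual{G}$ broken iff it is colored white in (i). Here $G_3$ is the bipartite graph on the blue and yellow vertices of $\dual{G}$ joined by the by edges, with blue vertices unbreakable and yellow breakable; breaking the white-colored yellow vertices detaches them from the blue side into leaves. The residue is connected iff the blue region $\bigcup B$ is connected, and it is acyclic iff the white region $\bigcup W$ is connected, by the usual planar Euler-formula duality between independent cycles of the residue and components of its complement; hence the residue is a tree iff (i) holds, giving (i)\,$\leftrightarrow$\,(vi). I expect the two genuinely technical steps of the whole lemma to be (a) the forced-direction computation underlying (iv) and (b) this Euler-characteristic argument translating ``tree'' into the double-connectivity condition of (i); the remaining correspondences are bookkeeping once the matching $M$ and the face-alternation structure are in hand.
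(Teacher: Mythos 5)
Your overall architecture matches the paper's: treat (i)--(iii) as the hub, using the observation that the bichromatic edges of the induced two-coloring form a $2$-regular spanning subgraph that is a single cycle precisely when both color classes are connected, and then re-encode into (iv), (v), (vi). Your extra bookkeeping is correct and in places an improvement: the white--blue edges do form a perfect matching, and your local computation that each vertex of $G_1$ has pattern one-in/two-out or two-in/one-out with the white--blue edge as the lone oppositely-directed edge gives a clean direct proof of (iv)~$\to$~(iii), which the paper only obtains indirectly. Your route to (vi) through (i) via an Euler-characteristic duality also works; the paper instead goes from (ii), arguing that a cycle in the residue would enclose a vertex of $G$, which must touch a white face.

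There is, however, one step you assert twice but never prove, and it is exactly the step the paper takes the most care over: the implication (iii)~$\Rightarrow$~(ii), i.e., that a Hamiltonian cycle $C$ using every white--blue edge automatically has all blue faces inside and all white faces outside. You rely on it when you claim the assignment ``inside $=B$, outside $=W$'' applied to a type-(iii) cycle ``fixes blue and white faces'' (otherwise the inverse map does not land in set (i)), and again when you orient a type-(iii) cycle ``so that the blue region lies to one fixed side,'' which presupposes that all blue faces already lie on one side. Containing every white--blue edge only forces each white face to be locally opposite each \emph{adjacent} blue face; it does not by itself place all white faces on the same (outer) side of $C$. The paper closes this by routing through $G_1$: the local in/out pattern (which you do establish) shows that $C$ is a consistently directed cycle of $G_1$; then Lemma~\ref{lem:cycle consistent} forces all of $C$'s edges on any given face to go the same way around it, and since $C$ has an edge on the counterclockwise-oriented exterior white face, $C$ is globally clockwise, so every clockwise (blue) face is inside and every counterclockwise (white) face is outside. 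You have the first half of that argument but not the second; without it your maps out of (iii) are not shown to be well-defined. The gap is fixable with exactly this addition, but as written it is a genuine missing step rather than routine bookkeeping.
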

\begin{proof}
  Refer to Figure~\ref{fig:connections}.
  We give explicit transformations between the sets; it can be checked that these transformations invert each other as needed.
  Figure~\ref{fig:connectgraph} summarizes the transformations we describe,
  which form a strongly connected graph.
  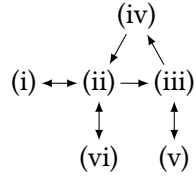
\begin{figure}
    \centering
    \begin{tikzpicture}[shorten >= -2pt, shorten <= -2pt, >=latex]

      \node[align=center] (i) at (-1,0) {(i)};
      \node[align=center] (ii) at (0,0) {(ii)};
      \node[align=center] (iii) at (1,0) {(iii)};
      \node[align=center] (iv) at (.5,.9) {(iv)};
      \node[align=center] (v) at (1,-1) {(v)};
      \node[align=center] (vi) at (0,-1) {(vi)};

      \draw[<->] (i) -- (ii);
      \draw[<->] (vi) -- (ii);
      \draw[<->] (v) -- (iii);
      \draw[->] (ii) -- (iii);
      \draw[->] (iii) -- (iv);
      \draw[->] (iv) -- (ii);

    \end{tikzpicture}
    \caption{The bijections we define for Lemma~\ref{lem:connections}.}
    \label{fig:connectgraph}
  \end{figure}

  \begin{description}
    \item[(i) $\to$ (ii):]
      Consider an assignment of colors to faces of $G$.
      For each vertex, two of the faces around it are one color
      and the third is the other color,
      so exactly two edges incident to it separate blue from white.
      The set of all edges separating blue from white thus
      forms a collection of cycles visiting each vertex once.

      We claim that this is actually a single cycle.
      If it were multiple cycles, they would
      divide the plane into more than two regions.
      Two of those regions must be the same color (blue or white),
      violating the assumption that each color is connected.

      So we have a Hamiltonian cycle separating blue from white,
      and since the exterior face is white,
      it contains all blue faces and no white faces of $G$.

    \item[(ii) $\to$ (i):]
      Given a cycle, assign blue to exactly the faces it contains.
      Since the cycle is Hamiltonian, it does not intersect itself,
      so the blue faces are connected and the white faces are connected.

    \item[(ii) $\to$ (iii):] If $C$ contains all blue faces and no white faces, then it must use every edge separating white from blue.

    \item[(iii) $\to$ (iv):] If $C$ is a cycle on $G_1$ which uses every edge separating white from blue,
      then at each individual vertex it is impossible for $C$ to reverse directions; thus it is always consistent with the orientations, so it is a directed Hamiltonian cycle.

    \item[(iv) $\to$ (ii):]
      Suppose $C$ is a directed Hamiltonian cycle of $G_1$.
      Since $C$ visits every vertex, it contains at least one edge of every face.
      Because $C$ contains an edge of the exterior face its orientation must be consistently clockwise.
      Therefore $C$ it encounters every blue face on its right side and every white face on the left, meaning it contains every blue face and does not contain any white faces.

    \item[(iii) $\to$ (v):]
      The edges separating white and blue faces are exactly the edges of $G_3$ remaining after contracting the yellow faces.
      Let $C$ be a Hamiltonian cycle of $G$ containing every white-blue edge, and let $C'$ be the Euler tour of $G_3$ obtained from $C$ by the contraction.
      It must be the case that $C$ contains exactly half of the edges incident to each yellow face, each of which connects two adjacent white-blue edges; so $C'$ is kiki.

    \item[(v) $\to$ (iii):]
      Suppose $C'$ is a kiki Euler tour of $G_3$.
      Let $C$ be the set of edges of $G$ consisting of all white-blue edges, together with those that connect consecutive edges in $C'$;
      then $C$ is a Hamiltonian cycle of $G$ containing every white-blue edge.

    \item[(ii) $\to$ (vi):]
      Note that $G_3$ does not have any edges between two
      breakable vertices,
      so breaking a vertex is equivalent to
      removing it and all incident edges.
      Thus TRVB becomes
      ``find an induced subgraph of $G_3$ containing all unbreakable vertices
      which is a tree''.

      Given a cycle $C$,
      break all yellow vertices which are outside $C$,
      or equivalently take the induced subgraph on vertices inside $C$.
      This subgraph is clearly connected.
      If it has a cycle, there is a face of $\dual G$ inside that cycle,
      which corresponds to a vertex $v$ of $G$.
      Then $v$ is strictly inside $C$.
      But $v$ must touch a white face,
      contradicting the fact that all white faces are outside $C$.
      Hence the induced subgraph on vertices inside $C$ is a tree.

    \item[(vi) $\to$ (ii):]
      Take $C$ to be the boundary
      of the tree containing blue faces
      and nonbroken yellow faces.
      Then $C$ is a cycle because it bounds a tree,
      its interior contains all blue faces (which cannot be broken)
      and no white faces (which are not present in $G_3$.
      Finally, $C$ is Hamiltonian because
      every vertex is incident to an edge separating blue from white,
      which must be in $C$.
      \qedhere

  \end{description}
\end{proof}
Furthermore, given any of the graphs $G_i$, equivalents to the others can be obtained by analogous transformations.  So these various problems can be regarded as equivalent.

An important special case of TRVB is when every breakable vertex has degree at most 3.  For planar graphs this condition is equivalent to requiring that every yellow face of the graph $G$ in the preceding discussion is a digon or triangle; it is also equivalent to kiki Euler tour with vertices of degree at most 6.
In this case, the problem can be solved in polynomial time by reducing it to a matroid parity problem.\cite{barnette}\cite{trvb}
In the next section we will discuss breakable vertices with higher degrees, with which the problem turns out to be ASP-complete.

The above characterization in Lemma~\ref{lem:connections} is general enough to usefully characterize all directed Hamiltonian max-degree-3 plane graphs. In particular, for any directed max-degree-3 plane graph $G$, it is possible to construct in polynomial time a spanning subgraph $H$ which contains all of the Hamiltonian cycles of $G$, and is essentially of the form of $G_1$ in Lemma~\ref{lem:connections}. We first give two useful facts about directed planar Hamiltonian cycles before showing how to construct $H$.

\begin{lemma}
\label{lem:cycle consistent}
Let $G$ be a directed plane multigraph, $C$ be a directed cycle of $G$, and $F$ be a face of $G$. Then every edge of $C$ that borders $F$ must have the same direction (clockwise or counterclockwise) around $F$.
\end{lemma}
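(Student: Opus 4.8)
The plan is to exploit the fact that a directed cycle $C$ in a plane multigraph is a Jordan curve, and to track which side of $C$ the face $F$ lies on. First I would fix the orientation of $C$ and invoke the Jordan curve theorem: since $C$ is a simple closed curve, it partitions the plane into a bounded interior and an unbounded exterior, and, as one traverses $C$ in its direction, the interior lies consistently on one side (always the left, or always the right). This ``consistent side'' property is the geometric engine of the whole argument.

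Next, the key structural observation is that $F$ lies entirely on one side of $C$. Because $F$ is a face of $G$ and $C$ is a subgraph of $G$, the open region $F$ is disjoint from $C$; being connected, it must lie in a single component of the plane minus $C$, that is, entirely in the interior or entirely in the exterior. Moreover, every edge $e$ of $C$ is not a bridge (it lies on the cycle $C$), so $e$ borders two distinct faces, one on each local side of $C$; one of these two faces is interior to $C$ and the other exterior.

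Then I would combine these facts. Suppose $F$ is interior to $C$ (the exterior case is entirely symmetric). For each edge $e$ of $C$ that borders $F$, the face $F$ must be the one on the interior side of $e$, since the other adjacent face is exterior. As the interior sits on a fixed side (say the left) of $C$'s direction, $F$ lies on the left of every such directed edge $e$. Finally I would translate ``$F$ is on the left of the directed edge $e$'' into ``$e$ runs counterclockwise around $F$'' (and ``right'' into ``clockwise''); since the side is the same for all edges of $C$ bordering $F$, so is the rotational direction, which is exactly the claim.

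The main obstacle will be stating the orientation bookkeeping cleanly, in particular pinning down the convention that relates the side of a directed edge to ``clockwise/counterclockwise around $F$,'' and handling the unbounded face, where the rotational convention is reversed but the consistency argument is unchanged. I should also take care in the multigraph setting to confirm that each edge of a cycle genuinely separates two distinct faces on opposite sides of $C$ (this is where the no-bridge property is used), and to check small degenerate configurations such as a digon cycle, where both of its edges border the same interior face yet still traverse it in the same rotational direction.
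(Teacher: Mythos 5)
Your proposal is correct and follows essentially the same route as the paper's proof: the cycle separates the plane, each face lies entirely on one side, and the side determines a consistent rotational direction for every cycle edge bordering that face. Your extra care about bridges, the unbounded face, and the digon case is a harmless refinement of the same argument, not a different approach.
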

\begin{proof}
A cycle in a plane graph splits the plane into two regions: inside the plane and outside the plane. Every face must lie either entirely inside or entirely outside the cycle. Any time a face touches the cycle, it must have the same orientation as the cycle: if the cycle is clockwise, then wherever it touches a face inside the cycle that edge must be oriented clockwise with respect to the face, and wherever it touches a face outside the cycle that edge must be oriented counterclockwise with respect to the face. Since every face lies entirely inside or entirely outside the cycle, all of the orientations of edges touching the face that are part of the cycle must be consistent.
\end{proof}
\begin{observation}
\label{obs:forced edges}
Let $G$ be a Hamiltonian directed plane multigraph, and let $C$ be a Hamiltonian cycle. Then any edge which is the only incoming or only outgoing edge from a vertex must be present in $C$.
\end{observation}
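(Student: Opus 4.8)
The plan is to use the single defining property of a directed Hamiltonian cycle: that it visits every vertex of $G$ exactly once. First I would translate this into a local statement at an arbitrary vertex $v$. Since $C$ is a closed directed walk that passes through $v$ exactly once, the portion of $C$ incident to $v$ consists of precisely one edge arriving at $v$ together with precisely one edge departing from $v$. Equivalently, $C$ contains exactly one edge directed into $v$ and exactly one edge directed out of $v$ --- no more, because $v$ is visited only once, and no fewer, because $C$ does pass through $v$.

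With this local picture established, the conclusion is immediate. Suppose $e$ is the only edge directed into $v$. Since $C$ must contain \emph{some} incoming edge at $v$, and $e$ is the only candidate present in $G$, we conclude $e \in C$. The case where $e$ is the only outgoing edge of $v$ is entirely symmetric, now using that $C$ contains exactly one outgoing edge at $v$.

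I expect essentially no obstacle here: planarity and the multigraph structure of $G$ play no role, and the statement follows directly from the definition of a Hamiltonian cycle. The only point demanding mild care is the multigraph setting, in which a vertex could a~priori carry several parallel incoming edges; but the hypothesis ``the only incoming or only outgoing edge'' is exactly what rules this out, so the argument goes through verbatim.
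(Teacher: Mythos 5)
Your argument is correct and is exactly the intended reasoning: the paper states this as an Observation with no proof at all, treating it as immediate from the fact that a Hamiltonian cycle uses precisely one incoming and one outgoing edge at each vertex. Your write-up supplies that omitted (and entirely standard) justification, so there is nothing to compare beyond noting that you spelled out what the paper took as obvious.
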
 

\begin{lemma}
  Let $G$ be a directed max-degree-3 plane multigraph.
  Then there exists a polynomial-time algorithm
  which either reports that $G$ has no Hamiltonian cycles,
  or computes a directed spanning subgraph $H$ of $G$
  so that the faces of $H$ can be 3-colored with \{blue, white, yellow\}
  so that every blue face is oriented clockwise and every white face counterclockwise,
  such that every Hamiltonian cycle of $G$ is contained in $H$.
\end{lemma}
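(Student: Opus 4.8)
The plan is to build $H$ by repeatedly discarding edges that provably cannot lie on any Hamiltonian cycle, until the surviving directed graph admits the desired face coloring, which will then be forced by the edge orientations. I would begin with a forcing phase based on Observation~\ref{obs:forced edges}. Since every vertex of a Hamiltonian cycle uses exactly one incoming and one outgoing edge, any vertex with in-degree~$0$ or out-degree~$0$ (or total degree below~$2$) immediately certifies that $G$ has no Hamiltonian cycle. Whenever a vertex has a unique in-edge, that edge is forced, and dually for a unique out-edge; a forced edge $(u,v)$ then forbids every other out-edge of $u$ and every other in-edge of $v$, so I delete those, which can create new unique-in or unique-out vertices. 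Iterating this propagation to a fixpoint takes polynomial time and either reports infeasibility or leaves a graph in which every in-degree and out-degree lies in $\{1,2\}$ and every deleted edge is on no Hamiltonian cycle.

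The heart of the argument is the following consequence of Lemma~\ref{lem:cycle consistent} together with the degree bound. Call a bounded face \emph{consistently oriented} if all of its boundary edges run the same way (all clockwise or all counterclockwise) around it. I claim a consistently clockwise face must lie inside every Hamiltonian cycle, and a consistently counterclockwise face outside. Fix a Hamiltonian cycle $C$, oriented so that its interior lies on the right; by Lemma~\ref{lem:cycle consistent} the edges of $C$ bordering an inside face run clockwise around it and those bordering an outside face run counterclockwise. If a consistently clockwise face $F$ were outside $C$, then no edge of $\partial F$ could belong to $C$; but each vertex of $\partial F$ has two incident edges on $\partial F$ and, by the degree-$3$ bound, at most one further edge, so $C$ could neither enter nor leave it, contradicting that $C$ is Hamiltonian. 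Hence $F$ is inside, and the counterclockwise case is symmetric. Note also that two consistently clockwise faces can never share an edge (that edge would run clockwise around both, impossible for the two sides of one edge), and likewise for two counterclockwise faces.

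With this in hand I color consistently clockwise faces blue, consistently counterclockwise faces white, and all remaining (mixed) faces yellow, normalizing so that the exterior face — which is outside every cycle — is white, fixing the global clockwise convention from the exterior boundary and propagating forcing there first if its edges are mixed. By the previous paragraph, blue faces are forced inside and white faces outside every Hamiltonian cycle, so an edge can be a cycle edge only if its right face is blue or yellow and its left face is white or yellow; any edge whose right face is forced white or whose left face is forced blue therefore lies on no Hamiltonian cycle, and I delete it. Each deletion merges faces, which can turn previously mixed faces consistently oriented, so I repeat the color-and-delete step to a fixpoint. The resulting $H$ is a spanning subgraph containing every Hamiltonian cycle of $G$; its blue faces are clockwise and its white faces counterclockwise by construction; and since blue faces (resp.\ white faces) are pairwise non-adjacent, together with the yellow faces they realize the $G_1$ structure of Lemma~\ref{lem:connections}.

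The step I expect to be the main obstacle is proving that the color-and-delete loop both terminates in polynomial time and never removes an edge used by some Hamiltonian cycle: I must argue that each deletion is justified by the forced-side property, so that no genuine cycle edge is ever lost, and that the faces stabilize after polynomially many rounds. A secondary delicacy is normalizing the global orientation through the exterior face and handling the multigraph corner cases — digons, parallel edges, and boundary vertices that meet a face more than once — where the clean ``two boundary edges plus at most one more'' count must be verified directly rather than assumed.
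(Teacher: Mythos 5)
There is a genuine gap in your second phase. Your key claim --- that a consistently clockwise face lies inside \emph{every} Hamiltonian cycle and a consistently counterclockwise face outside --- is false as stated, because it silently assumes every Hamiltonian cycle is traversed clockwise. A directed plane multigraph can have Hamiltonian cycles of both global orientations: take two vertices joined by three parallel edges directed $u\to v$, $v\to u$, $u\to v$ in cyclic order. The two digon faces are each consistently oriented but in opposite senses, the outer face alternates, and each digon lies inside one of the two Hamiltonian cycles and outside the other. Your normalization via the exterior face does not rescue this, precisely because the exterior face can be mixed. The deletion rule you derive from the claim is also broken: under the standard handedness an edge is by definition clockwise around its right face, so its right face can never be consistently counterclockwise (white) and its left face can never be blue --- the rule fires on nothing --- and under the opposite handedness it fires but is justified only by the false inside/outside claim. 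Either way, after only the forcing phase a face can contain a forced edge together with boundary edges of both orientations around it, so the trichotomy (all clockwise / all counterclockwise / alternating) on which the existence of the 3-coloring rests is not established, and you never address why two alternating faces cannot share an edge.

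The repair is to make the second deletion rule purely local, with no reference to inside/outside or to a global orientation: for each face, delete every edge whose orientation around that face disagrees with some forced edge on that face. This is exactly what Lemma~\ref{lem:cycle consistent} licenses, since all cycle edges bordering one face are consistently oriented around it and forced edges lie on every Hamiltonian cycle. At the fixpoint of this rule together with your forcing propagation, a face containing a forced edge is consistently oriented, and a face with no forced edge alternates by the degree-$3$ count you already sketched (at each vertex, two unforced boundary edges must both point in or both point out, else the vertex would have two in-edges and two out-edges). The same count gives the missing properness step for yellow faces: if two alternating faces shared an edge, an endpoint of that edge would be forced to have indegree or outdegree $0$. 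With those changes your argument coincides with the paper's proof.
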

\begin{proof}
We describe a polynomial-time algorithm to compute $H$ from $G$.

Call an edge \emph{forced} if it must be in every Hamiltonian cycle by Observation~\ref{obs:forced edges}.  If a vertex has two forced edges, the third edge can never be taken. 

We now give 2 rules to remove edges from $G$. To get $H$, repeatedly apply these rules until they do not remove any edges; the resulting graph is $H$.  If at any point a vertex has indegree 0 or outdegree 0, then terminate and report that $G$ has no Hamiltonian cycles.

\begin{enumerate}
  \item For every vertex with two forced edges, delete any other edge incident to that vertex.
  \item For every face, delete all edges whose orientation is not consistent with every forced edge on that face.
\end{enumerate}

The first rule clearly does not remove any Hamiltonian cycles. By Lemma~\ref{lem:cycle consistent}, the second rule will also never delete an edge that is part of any Hamiltonian cycle. Thus, $H$ and $G$ have the same Hamiltonian cycles.

All that remains is to show that the faces of $H$ can be colored appropriately. We will first show that every face is one of three types:
\begin{enumerate}
  \item Every edge is oriented clockwise (blue faces)
  \item Every edge is oriented counterclockwise (white faces)
  \item The edges alternate orientation around the face (yellow faces)
\end{enumerate}

Consider a face $F$. Suppose $F$ has at least one forced edge. Then because the 2nd rule does not remove any edges from $H$, so every edge on $F$ must have the same orientation.

Now suppose $F$ has no forced edges. Consider any vertex incident to $F$. Then since neither of the edges of $F$ incident to that vertex are forced, they must either both point into or both point out of that vertex. This means that these edges must be oriented opposite ways (i.e. one clockwise and one counterclockwise) around $F$. Since this is true at every vertex of $F$, the edges alternate around $F$.

Finally, we need to show that if two faces share an edge, they must be of different types. It's never possible for two blue faces to share an edge, because if an edge is clockwise according to one of the faces, it must be counterclockwise according to the other. A similar argument applies for white faces. For yellow faces, consider a vertex $v$ incident to the shared edge is incident to.  Since edges alternate orientation around yellow faces, it follows that $v$ has either indegree 0 or outdegree 0, which is impossible.
\end{proof}

\section{ASP-Completeness of Tree-Residue Vertex-Breaking}

Demaine and Rudoy \cite{trvb} prove several NP-hardness results for TRVB
using reductions from finding Hamiltonian cycles on a max-degree-3 planar directed graph.
At the time, this Hamiltonian cycle problem was not known ASP-complete,
so they did not consider ASP-completeness.

More recently, Bosboom et al.\ \cite{LessThanEdgeMatching_JIP} showed that finding Hamiltonian cycles
on a directed max-degree-3 planar graph is ASP-complete,
using a reduction from positive 1-in-3SAT.

Several of the reductions used by Demaine and Rudoy \cite{trvb}
are easily verified to be parsimonious,
proving ASP-completeness.
We are specifically interested in the results of Section~4, on planar $(\{k\},\{4\})$-TRVB.

They first reduce finding Hamiltonian cycles on a max-degree-3 planar directed graph to
finding Hamiltonian cycles on a planar graph where all vertices have indegree and outdegree 2
and vertices have their two in-edges and their two out-edges adjacent
in the planar embedding.
This last condition is called \defn{non-alternating},
because vertices are not allowed to alternate in-edges and out-edges.
The reduction is by contracting forced edges,
and is straightforwardly parsimonious.

\begin{theorem}\label{thm:in2out2}
  Finding Hamiltonian cycles on non-alternating indegree-2 outdegree-2 planar graphs is ASP-complete.
\end{theorem}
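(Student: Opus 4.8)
The plan is to establish membership and hardness separately, arranging the reduction to preserve solution counts. Membership is immediate: a Hamiltonian cycle is a polynomially checkable witness, so this is an NP search problem. For hardness, I would reduce from Hamiltonicity on directed max-degree-3 planar multigraphs, which is ASP-complete by Bosboom et al.~\cite{LessThanEdgeMatching_JIP}, using the forced-edge contraction of Demaine and Rudoy~\cite{trvb}, and verify that this reduction is parsimonious.

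Given a directed max-degree-3 planar multigraph $G$, I first discard hopeless instances: if any vertex has indegree $0$ or outdegree $0$ it lies on no Hamiltonian cycle, so I output a fixed unsatisfiable instance. Now every vertex has indegree and outdegree at least $1$, hence is of type $(1,1)$, $(1,2)$, or $(2,1)$, listing indegree then outdegree. By Observation~\ref{obs:forced edges}, the unique in-edge of any indegree-$1$ vertex and the unique out-edge of any outdegree-$1$ vertex are \emph{forced}. A short case check shows an edge $u\to v$ is forced unless $u$ has outdegree $2$ and $v$ has indegree $2$; equivalently, the \emph{free} (non-forced) edges are exactly those running from a $(1,2)$ vertex to a $(2,1)$ vertex. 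The reduction is then: (a) repeatedly contract forced edges, and (b) whenever a vertex carries both a forced in-edge and a forced out-edge, delete its remaining (third) edge, which no Hamiltonian cycle can use. Output the resulting graph $H$.

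The key structural claim is that every vertex of $H$ has indegree $2$, outdegree $2$, and is non-alternating. Each vertex has at most one forced in-edge and at most one forced out-edge---two forced in-edges (or two forced out-edges) at a vertex force a non-Hamiltonian configuration and let me reject---so the forced edges form a disjoint union of directed paths (a forced cycle on a proper vertex subset likewise yields rejection). Contracting each path $x_0\to\cdots\to x_k$ to a single vertex, I would argue that every internal $x_i$ has both a forced in- and a forced out-edge, so step (b) deletes its lone free edge; the start $x_0$ must be a $(2,1)$ vertex whose two in-edges are both free, and the end $x_k$ a $(1,2)$ vertex whose two out-edges are both free. Hence the contracted vertex inherits exactly two free in-edges (at $x_0$) and two free out-edges (at $x_k$), giving indegree and outdegree $2$. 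Non-alternation then follows from planarity: the two in-edges are consecutive in the rotation at $x_0$ and the two out-edges consecutive at $x_k$, and contracting the simple arc $x_0\to\cdots\to x_k$ keeps each pair consecutive, so the cyclic order at the contracted vertex is (in, in, out, out).

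Finally, for parsimony: contracting a forced edge gives a bijection between Hamiltonian cycles (reinserting the edge inverts it), and deleting an edge used by no Hamiltonian cycle leaves the solution set unchanged, so composing all steps yields a polynomial-time bijection between Hamiltonian cycles of $G$ and of $H$ in both directions. Together with the source ASP-completeness, this proves the theorem. I expect the main obstacle to be the structural claim of the third paragraph: pinning down the path decomposition of the forced edges, checking that the deletion rule exactly trims every surplus edge at internal and endpoint vertices, and ruling out degenerate products of contraction (multi-edges, and especially self-loops arising when the two endpoints of one forced path are joined by a free edge) so that the indegree-$2$ outdegree-$2$ non-alternating guarantee is clean.
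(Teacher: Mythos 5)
Your proposal takes essentially the same route as the paper: the paper proves Theorem~\ref{thm:in2out2} by invoking the forced-edge-contraction reduction of Demaine and Rudoy~\cite{trvb} from directed max-degree-3 planar Hamiltonicity (ASP-complete by Bosboom et al.~\cite{LessThanEdgeMatching_JIP}) and noting that it is ``straightforwardly parsimonious.'' Your write-up simply fills in the details the paper leaves implicit (which edges are forced, the path structure of forced edges, the degree count and non-alternation at contracted vertices, and the bijection on Hamiltonian cycles), and these details check out.
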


Next, Demaine and Rudoy reduce this problem to a version of Tree-Residue Vertex-Breaking.
Specifically, Demaine and Rudoy \cite{trvb} prove NP-hardness of TRVB
on a planar graph where each unbreakable vertex has degree $4$
and each breakable vertex has degree $k$, for any constant $k\geq 4$.
This is \defn{planar $(\{k\},\{4\})$-Tree-Residue Vertex-Breaking}.
This reduction is a bit more complicated
(see Section~4.2 and in particular Figures~11 through 13 of \cite{trvb})
but it is again parsimonious;
indeed, \cite[Lemmas~4.14 and 4.15]{trvb} show that there is a bijection
between Hamiltonian cycles in the input problem
and solutions to the TRVB instance.

\begin{theorem}\label{thm:trvbk4}
  Planar $(\{k\},\{4\})$-TRVB is ASP-complete, for each $k\geq 4$.
\end{theorem}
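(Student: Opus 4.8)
The plan is to reuse the Demaine--Rudoy reduction from \cite{trvb} essentially verbatim and argue that it is parsimonious, so that ASP-completeness transfers from the source problem. Concretely, I would take as the source the problem of Theorem~\ref{thm:in2out2} --- finding Hamiltonian cycles on non-alternating indegree-2 outdegree-2 planar graphs --- which we have already shown to be ASP-complete, and compose with the polynomial-time many-one reduction of \cite[Section~4.2]{trvb} that produces a planar $(\{k\},\{4\})$-TRVB instance. Since ASP-completeness is closed under composition of parsimonious reductions, it suffices to exhibit a polynomial-time-computable bijection, in both directions, between Hamiltonian cycles of the source instance and tree-residue vertex-breakings of the target instance.

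First I would recall the structure of the reduction at each vertex. Each non-alternating indegree-2 outdegree-2 vertex $v$ of the input graph is replaced by a fixed gadget consisting of a degree-$k$ breakable vertex together with a constant number of surrounding degree-4 unbreakable vertices, wired to the four half-edges incident to $v$. The non-alternating condition --- that the two in-edges are cyclically adjacent and the two out-edges are cyclically adjacent --- is exactly what lets the gadget encode the two legal pass-through behaviors of a Hamiltonian cycle at $v$, and no others. The key claim is that breaking or not breaking the degree-$k$ vertex is the \emph{only} local degree of freedom inside the gadget, and that these two states correspond bijectively to the two ways the cycle can traverse $v$; moreover, the global ``the broken-and-subdivided graph is a tree'' condition must correspond exactly to the global ``the chosen edges form a single spanning cycle'' condition. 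This correspondence is precisely the content of \cite[Lemmas~4.14 and~4.15]{trvb}, which already establish a bijection between Hamiltonian cycles of the input and solutions of the TRVB instance.

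For the generality over $k \geq 4$ rather than only $k = 4$, I would simply invoke the fact that the construction of \cite{trvb} already realizes degree-$k$ breakable vertices for every constant $k \geq 4$: the degree-$k$ breakable vertex is obtained from a degree-$4$ core by attaching $k-4$ extra incident edges routed into small unbreakable sub-gadgets that admit a unique completion, so that raising the degree introduces no new solutions and preserves the two-state break/no-break behavior. I would check that this padding leaves the bijection of \cite[Lemmas~4.14 and~4.15]{trvb} intact, as it neither adds nor removes any break choices with a global effect.

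The main obstacle is not constructing anything new but confirming that the correspondence recorded in \cite[Lemmas~4.14 and~4.15]{trvb} is a genuine two-way bijection, computable in polynomial time in both directions, and in particular that no gadget admits a locally consistent breaking set beyond the single break/no-break bit and that no breaking set is miscounted. Demaine and Rudoy proved these lemmas but did not phrase the result as ASP-completeness, because at the time their source Hamiltonicity problem was not known to be ASP-complete; the only genuinely new ingredient here is Theorem~\ref{thm:in2out2}. Once the bijection is read off from their lemmas and composed with the ASP-completeness from Theorem~\ref{thm:in2out2}, ASP-completeness of planar $(\{k\},\{4\})$-TRVB follows for each $k \geq 4$.
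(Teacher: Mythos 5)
Your proposal matches the paper's proof: both take the Demaine--Rudoy reduction from non-alternating indegree-2 outdegree-2 planar Hamiltonicity (ASP-complete by Theorem~\ref{thm:in2out2}) to planar $(\{k\},\{4\})$-TRVB and observe that \cite[Lemmas~4.14 and 4.15]{trvb} already supply the bijection between Hamiltonian cycles and breakings, so the reduction is parsimonious. The paper states exactly this and nothing more, so your additional gadget-level commentary is just elaboration on the same argument.
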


To further simplify our reductions, we will use a slightly simpler version of TRVB:
degree-4 breakable vertices and degree-1 unbreakable vertices.

\begin{theorem}\label{thm:trvb41}
  Planar $(\{4\},\{1\})$-TRVB is ASP-complete.
\end{theorem}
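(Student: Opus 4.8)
The plan is to give a parsimonious, planarity-preserving reduction from planar $(\{4\},\{4\})$-TRVB, which is ASP-complete by Theorem~\ref{thm:trvbk4} with $k = 4$. In such an instance every breakable vertex already has degree $4$, so those vertices can be left untouched; the only task is to replace each \emph{unbreakable} degree-$4$ vertex by a gadget built solely from degree-$4$ breakable vertices and degree-$1$ unbreakable vertices, in a way that reproduces exactly the behavior of an unbreakable degree-$4$ vertex.

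The first key observation I would establish is a ``forcing'' lemma: if a breakable vertex $b$ is adjacent to a degree-$1$ unbreakable vertex (a pendant leaf) $\ell$, then $b$ cannot be broken in any valid solution. Indeed, after subdividing edges and deleting $b$, the subdivision vertex on the edge $b\ell$ together with $\ell$ forms a $K_2$ with no other neighbors, hence an isolated component; since the rest of the instance is nonempty, the residue is disconnected and thus not a tree. So a pendant leaf renders a breakable vertex effectively unbreakable.

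Using this, I replace each unbreakable degree-$4$ vertex $u$, whose incident edges in cyclic order are $e_1, e_2, e_3, e_4$, by two breakable degree-$4$ vertices $b$ and $b'$ joined by an edge, where $b$ carries $e_1, e_2$ and a pendant leaf, and $b'$ carries $e_3, e_4$ and a pendant leaf. Each of $b, b'$ then has degree exactly $4$, each is forced unbroken by its leaf, and the $2{+}2$ split together with the $b$--$b'$ edge respects the rotation at $u$, so planarity and the cyclic order of the four edges are preserved. The gadget is internally a tree (a path $b$--$b'$ decorated with two pendant leaves) that connects all four external edges into a single component, exactly as $u$ does.

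Finally I would verify the bijection. Because both gadget vertices are forced unbroken and the leaves cannot meaningfully be broken, every gadget has a unique internal configuration, so breakings of the new instance correspond bijectively to breakings of the original (identifying each with its restriction to the untouched breakable vertices). To see that tree-ness is preserved under this correspondence, note that pendant leaves can be added or removed without affecting whether a graph is a tree, after which each gadget contracts along the single edge $b$--$b'$ back to the vertex $u$; contracting a non-loop edge preserves the property of being a tree. Hence the residue of the new instance is a tree iff the residue of the original is, giving a solution-preserving bijection and therefore ASP-completeness. The main thing to get right is this last equivalence --- confirming that the forced gadget is genuinely interchangeable with a single unbreakable degree-$4$ vertex for the tree-residue condition, and that the forcing argument introduces no spurious extra solutions that would break parsimony.
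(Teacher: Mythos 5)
Your proposal is correct and takes essentially the same approach as the paper: the paper likewise reduces from planar $(\{4\},\{4\})$-TRVB (Theorem~\ref{thm:trvbk4}) by simulating each degree-4 unbreakable vertex with degree-4 breakable vertices that are pinned unbroken by attached degree-1 unbreakable leaves (Figure~\ref{fig:trvb1-4}). Your explicit two-vertex gadget, forcing lemma, and contraction argument simply spell out the details the paper leaves to the figure.
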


\begin{proof}
  It suffices to parsimoniously simulate
  a degree-4 unbreakable vertex.
  Such a simulation is shown in Figure~\ref{fig:trvb1-4}.
  No vertex in the simulation can be broken in a solution to TRVB.
  \begin{figure}
    \centering
    \includegraphics[width=.4\linewidth]{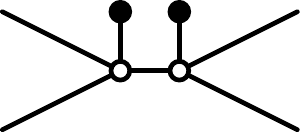}
    \caption{
      Simulating a degree-4 unbreakable vertex
      using degree-4 breakable vertices (white)
      and degree-1 unbreakable vertices (black).
    }
    \label{fig:trvb1-4}
  \end{figure}
\end{proof}

\begin{theorem}\label{thm:trvb6}
  Planar $(\{6\},\emptyset)$-TRVB is ASP-complete.
\end{theorem}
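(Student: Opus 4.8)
The plan is to reduce from planar $(\{6\},\{4\})$-TRVB, i.e.\ Theorem~\ref{thm:trvbk4} with $k=6$, which is already ASP-complete. In such an instance the breakable vertices are exactly degree-$6$ breakable vertices, which are permitted verbatim in the target problem, so they pass through the reduction unchanged; the only thing left to simulate is a degree-$4$ \emph{unbreakable} vertex, using a gadget built solely from degree-$6$ breakable vertices (parallel edges are allowed, since TRVB is defined on multigraphs). I choose this source rather than the simpler $(\{4\},\{1\})$ version of Theorem~\ref{thm:trvb41} because of a parity constraint: any gadget all of whose vertices have degree $6$ has $6n$ incident edge-ends in total, so after subtracting twice its internal edges it always exposes an \emph{even} number of ports. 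Hence a degree-$1$ (odd) vertex cannot be simulated locally, whereas a degree-$4$ (even) one can; reducing from $(\{6\},\{4\})$ keeps every replacement local and parity-consistent.

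For the gadget I would use a ``hub and cap'' design. Recall the basic dichotomy in any tree-residue: an \emph{unbroken} degree-$6$ vertex acts as a connector joining all six of its neighbors, while a \emph{broken} one acts as a separator that splits them into six pendant leaves. I would take a central hub vertex $c$ whose four ports are the four edges of the simulated vertex, and route its remaining two edges into a small \emph{cap} subgadget of degree-$6$ breakable vertices. The cap is designed so that breaking $c$ would detach it as its own connected component, violating the global tree condition; consequently $c$ is forced to remain unbroken and joins its four ports at a single node, exactly as a degree-$4$ unbreakable vertex does. The cap must in turn resolve its own internal degree-$6$ constraints by breaking, in such a way that it attaches to $c$ through only \emph{one} effective edge (the other attachment edge becoming a pendant via an internal break), so that no cycle $c \to \text{cap} \to c$ survives and no part of the gadget is disconnected.

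The step I expect to be the main obstacle is making the cap \emph{parsimonious}: for ASP-completeness I need the whole gadget to admit exactly one tree-consistent internal breaking pattern, so that solutions of the source and target instances are in bijection. The naive symmetric cap --- two twin vertices joined by parallel edges --- is tree-consistent in two ways (break either twin), introducing a spurious factor of two, so I must break this symmetry and construct an asymmetric cap with a unique solution, all while keeping every internal vertex degree-$6$ breakable and the gadget planar. Verifying the resulting local bijection --- that the gadget has a single internal state, that it forbids breaking $c$, and that it introduces no other globally consistent configuration --- is the crux. Once that is established, planarity is clearly preserved, the blow-up is a constant per vertex, and ASP-completeness of planar $(\{6\},\emptyset)$-TRVB follows from Theorem~\ref{thm:trvbk4}.
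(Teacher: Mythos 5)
Your high-level framing matches the paper exactly: reduce from planar $(\{6\},\{4\})$-TRVB (Theorem~\ref{thm:trvbk4} with $k=6$), let the degree-6 breakable vertices pass through unchanged, and simulate the degree-4 unbreakable vertex by a planar gadget built only from degree-6 breakable vertices. Your parity observation (an all-degree-6 gadget exposes an even number of ports, so degree-1 cannot be simulated and the $(\{6\},\{4\})$ source is the right one) is correct and is implicitly why the paper makes the same choice. But the entire substance of this theorem is the gadget, and you do not have one: you yourself identify constructing a parsimonious cap as ``the crux'' and leave it open. As written, the proof is incomplete at precisely the step that matters.

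Moreover, the specific ``unbroken hub with a two-edge cap'' design you propose is harder to salvage than you suggest. A cap is a connected all-degree-6-breakable subgadget with exactly two half-edges to the hub; adding the hub as a degree-2 attachment point, a tree-residue forces the number of broken cap vertices to be exactly $(2n+1)/5$ where $n$ is the number of cap vertices, so $n \equiv 2 \pmod 5$. For $n=2$ the degree constraints force the symmetric five-parallel-edge cap with its two solutions (self-loops are immediately fatal whether the vertex is broken or not), so the smallest candidate for an asymmetric cap already has $n=7$, and uniqueness there is far from clear. The paper's gadget avoids this shape entirely: it uses three degree-6 breakable vertices, a top vertex $T$ joined by \emph{triple} edges to each of two bottom vertices $A$ and $B$, which are joined to each other by a single middle edge and carry two external ports each. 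If $T$ is unbroken, the parallel edges force $A$ and $B$ to be broken, which isolates the middle edge; so $T$ must be broken, whereupon $A$ and $B$ must be unbroken (else the pendant $T$--$A$ edges become isolated components). The unique surviving state is the path $A$--$B$ with four ports, i.e., an unbreakable degree-4 junction. The lesson is that the forced-\emph{broken} vertex should be the one adjacent to everything, with the ports distributed over the forced-unbroken vertices --- the opposite of your hub-and-cap architecture.
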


\begin{proof}
  It again suffices to simulate a degree-4 breakable vertex.
  Such a simulation is shown in Figure~\ref{fig:trvb6}.
  If the top vertex is not broken,
  both others must be broken, disconnecting the middle edge.
  So the top vertex must be broken,
  and then the other two vertices must not be.
  \begin{figure}
    \centering
    \includegraphics[width=.4\linewidth]{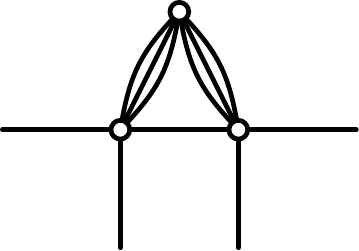}
    \caption{
      Simulating a degree-4 unbreakable vertex
      using degree-6 breakable vertices.
    }
    \label{fig:trvb6}
  \end{figure}
\end{proof}

\section{Hamiltonian Cycles in Grid Graphs}
\label{sec:grid graphs}

In this section, we prove ASP-completeness of finding Hamiltonian cycles
in several natural classes of grid graphs.
We begin by defining the types of graph that appear in our results.

\begin{definition}
  A \defn{grid graph}
  is an induced subgraph of the square lattice.
  That is, its vertices are a subset of $\mathbb Z^2$,
  and it has an edge between each pair of vertices at distance $1$.
  In a \defn{directed grid graph}, each edge has a direction,
  so there is exactly one edge between each pair of vertices at distance $1$.
\end{definition}

\begin{definition}
  A \defn{rectangular} grid graph
  is one whose vertex set consists of all lattice points within a rectangle.
\end{definition}

\begin{definition}
  A graph is \defn{max-degree-3}
  if each of its vertices have degree at most 3.
\end{definition}

\begin{definition}
  A \defn{spanning subgraph} of $G$
  is a subgraph of $G$ which contains all of the vertices
  (and some subset of the edges) of $G$.
\end{definition}

Note that grid graphs contain all possible edges:
graphs that contain only some of the edges
are (spanning) subgraphs of grid graphs.

We consider three types of graph
for each of undirected and undirected.
Our results are summarized in Table~\ref{tbl:summary}.

Most of our ASP-completeness results are by reductions
from planar $(\{4\},\{1\})$-TRVB, and use the same core idea illustrated in Figure~\ref{fig:exvertex}.
This is a breakable degree-8 vertex, with the yellow
square in the middle representing the vertex itself
and the blue tentacles representing edges.
We replace every vertex in the TRVB instance with a vertex like the one shown, and connect the tentacles of adjacent vertices.
By Lemma~\ref{lem:connections}, Hamiltonian cycles of the resulting graph correspond to solutions of the original TRVB instance.

\begin{figure}
  \centering
  \includegratility[trim=15 15 15 15]{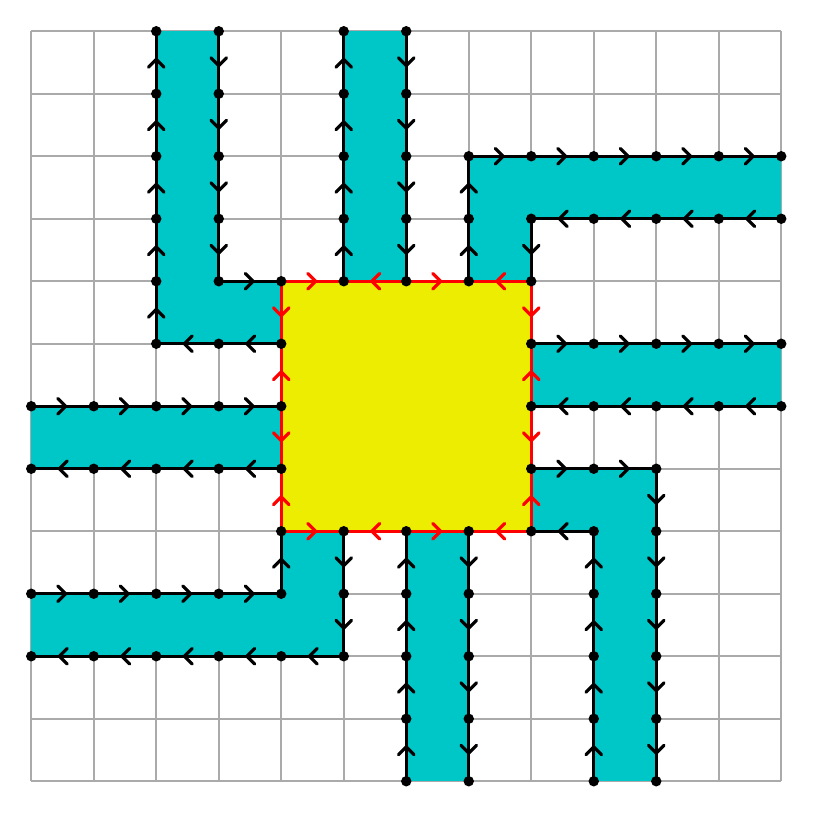}
  \caption{
    An example showing how reductions from TRVB
    to Hamiltonian cycle work.
  }
  \label{fig:exvertex}
\end{figure}

This idea works equally well for directed and undirected graphs.
To apply this idea to each of the five types of graph we prove ASP-completeness for,
we need to show how to draw gadgets for degree-4 breakable and degree-1 unbreakable vertices
in that type of graph,
while ensuring that the tentacles representing edges do not interfere with each other.

\subsection{Rectangular Grid Graphs}
\label{sec:rectangular}

\begin{theorem}[\cite{undir_rect}]
  Finding Hamiltonian cycles on an undirected rectangular grid graph is in P.
\end{theorem}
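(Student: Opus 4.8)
The plan is to reduce the algorithmic question to a clean combinatorial characterization: the $m \times n$ rectangular grid graph has a Hamiltonian cycle if and only if $m, n \geq 2$ and $mn$ is even. Once this characterization is established, the decision procedure is immediate --- simply test the two conditions on the input dimensions $m$ and $n$ in constant time --- and a witness cycle can be written out in $O(mn)$ time. So the entire content lies in proving the characterization.

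For necessity, I would first dispose of the degenerate case: if $m = 1$ or $n = 1$, the graph is a simple path (or a single vertex), which contains no cycle at all, let alone a spanning one. For the parity condition, I would $2$-color the lattice points like a checkerboard so that adjacent vertices receive opposite colors; this exhibits the grid graph as bipartite. Any Hamiltonian cycle is then an even cycle that alternates colors, so it uses the two color classes equally and must have even length $mn$; hence $mn$ is even.

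For sufficiency I would give an explicit serpentine (``boustrophedon'') construction. Assume without loss of generality that the number of columns $n$ is even (otherwise the number of rows is even and we transpose the grid). Reserve the top row as a return spine, and snake through the remaining $m-1$ rows one column at a time: descend column $0$, cross to column $1$ and ascend it, cross to column $2$ and descend, and so on. Because $n$ is even, this serpentine traverses every cell of rows $1, \dots, m-1$ and terminates at the top of the last column; closing the path through the reserved top row then yields a single spanning cycle. I would verify by a short parity check that the snake ends on the correct side, so that the two closing edges along the top row are available and no vertex is repeated.

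The construction itself is routine; the one point deserving care is the bookkeeping that guarantees the serpentine closes into a single cycle rather than leaving a Hamiltonian path, which is exactly where the parity of the even dimension is used. The deeper conceptual point --- and the reason this case is easy whereas general grid graphs are NP-complete --- is that a solid rectangle admits no ``holes'' or boundary irregularities, so the two obvious necessary conditions are also sufficient. I expect confirming the absence of any further obstruction (by exhibiting the construction in every admissible case, including the small base cases such as $2 \times 2$) to be the main, though modest, obstacle.
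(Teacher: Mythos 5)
Your proof is correct and is essentially the classical argument from the cited reference \cite{undir_rect} (bipartite parity for necessity, a boustrophedon cycle using a reserved boundary row for sufficiency); the paper itself states this theorem only as a citation and gives no proof. The one bookkeeping point you flag --- that the snake ends at the top of column $n-1$ because $n-1$ is odd when $n$ is even, so the reserved top row closes the cycle --- checks out, including the $2\times 2$ base case.
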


\begin{theorem}
  \label{thm:directed rectangular}
  Finding Hamiltonian cycles on a directed rectangular grid graph is ASP-complete.
\end{theorem}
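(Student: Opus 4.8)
The plan is to give a parsimonious reduction from planar $(\{4\},\{1\})$-TRVB, which is ASP-complete by Theorem~\ref{thm:trvb41}, following the general template of Figure~\ref{fig:exvertex}: each TRVB vertex becomes a local gadget whose two local Hamiltonian-cycle behaviors correspond to breaking or not breaking that vertex, and each TRVB edge becomes a ``tentacle'' linking two gadgets. By Lemma~\ref{lem:connections}, a Hamiltonian cycle of the assembled graph exists exactly when the TRVB instance is solvable, and the correspondence is a bijection. The essential new difficulty, compared with the max-degree-3 (induced-subgraph) setting, is that a rectangular grid graph must contain \emph{every} lattice point of its bounding rectangle, and the cycle must visit all of them; we are free only to choose the \emph{direction} of each grid edge, not to delete vertices or edges.

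First I would lay out the TRVB graph on the integer lattice at a large constant scale, placing a breakable-degree-4 gadget or an unbreakable-degree-1 gadget at each vertex site and routing directed ``wire'' tentacles along grid rows and columns for the edges. Each gadget and wire is built purely by orienting the grid edges that are already present. The orientations are chosen so that, locally, the only directed traversals consistent with Lemma~\ref{lem:cycle consistent} are the intended ones: a wire is a chain of vertices each having indegree and outdegree~$1$, so by Observation~\ref{obs:forced edges} its traversal is forced and unique, and a vertex gadget admits exactly the two intended local solutions (cap off the tentacles separately, or link them through), matching break versus not-break.

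The main obstacle is filling the remainder of the rectangle. Every lattice point not used by a gadget or wire must still be visited, and to preserve parsimony it must be visited in exactly one way. The key is that directedness lets us force this: I would orient the edges of each empty region so that its vertices form a forced, space-filling directed path---for instance a boustrophedon (back-and-forth) snake in which every filler vertex has indegree and outdegree~$1$---whose two endpoints splice into adjacent gadget or wire structure. By Observation~\ref{obs:forced edges} every edge of such a region is then forced, so the filler contributes no choice and hence no extra solutions. Orienting the boundaries between filler blocks, wires, and gadgets so that these forced paths merge seamlessly into a single global cycle (rather than pinching off a separate component) is the delicate design step, and it is exactly here that directedness is indispensable: recall that undirected rectangular Hamiltonicity is in~P, so the orientation must do all the work of breaking symmetry.

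Finally I would verify parsimony. Because every wire and every filler region is forced, a directed Hamiltonian cycle of the construction is determined entirely by the independent break/not-break choice at each vertex gadget; conversely each TRVB solution yields exactly one such cycle. This gives the required polynomial-time bijection between solutions, completing the ASP-completeness reduction.
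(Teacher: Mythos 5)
Your overall strategy is exactly the paper's: reduce from planar $(\{4\},\{1\})$-TRVB (Theorem~\ref{thm:trvb41}), realize breakable degree-4 and unbreakable degree-1 vertices as local gadgets joined by forced directed tentacles, and then fill the rest of the rectangle with forced structure so that the only remaining freedom is the break/not-break choice at each vertex gadget. However, one step as you state it cannot work: you claim the filler regions can be oriented as a boustrophedon snake ``in which every filler vertex has indegree and outdegree~$1$.'' In a rectangular grid graph \emph{every} lattice edge is present, so an interior filler vertex has four incident directed edges and its indegree plus outdegree equals~$4$; indegree $1$ and outdegree $1$ is impossible, and Observation~\ref{obs:forced edges} does not apply directly. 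The forcing has to come instead from a cascade: an edge that is the unique out-edge (or in-edge) of some vertex is forced, and any other edge sharing that source (or target) becomes unusable, which may create new forced edges, and so on. The paper makes this cascade work by aligning all gadgets, tentacles, and turns on a common $2\times 2$ parity so that every hole decomposes into $2\times 2$ squares, maintaining the invariant that each unfilled hole has a $2\times 2$ square adjacent to an already-forced edge, and filling one such square at a time by rerouting that forced edge through the four new vertices (Figure~\ref{fig:holefilling}), checking locally that all other new edges are unusable; iterating fills each hole with a spanning forest of $2\times 2$ squares. Your snake idea is plausibly repairable along these lines, but as written the justification for why the filler contributes no extra solutions rests on a degree count that is false in the rectangular setting, and you would also need to specify the parity/layout discipline that guarantees each empty region abuts a forced edge to splice into.
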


\begin{proof}

  We first consider directed grid graphs, and later fill in holes to make them rectangular.
  Everything we need for this is shown in Figure~\ref{fig:dirrect}.
  The yellow rectangles are degree-4 breakable vertices
  with exactly two local solutions,
  and the dead end in the bottom left is a degree-1 unbreakable vertex.
  As before, blue is inside the loop and yellow might be
  inside the loop depending on the choice made for a vertex gadget.
  If we ignore the gray edges, this is essentially the same as Figure~\ref{fig:exvertex}.

  \begin{figure}
    \centering
    \includegratility[trim=15 30 15 15]{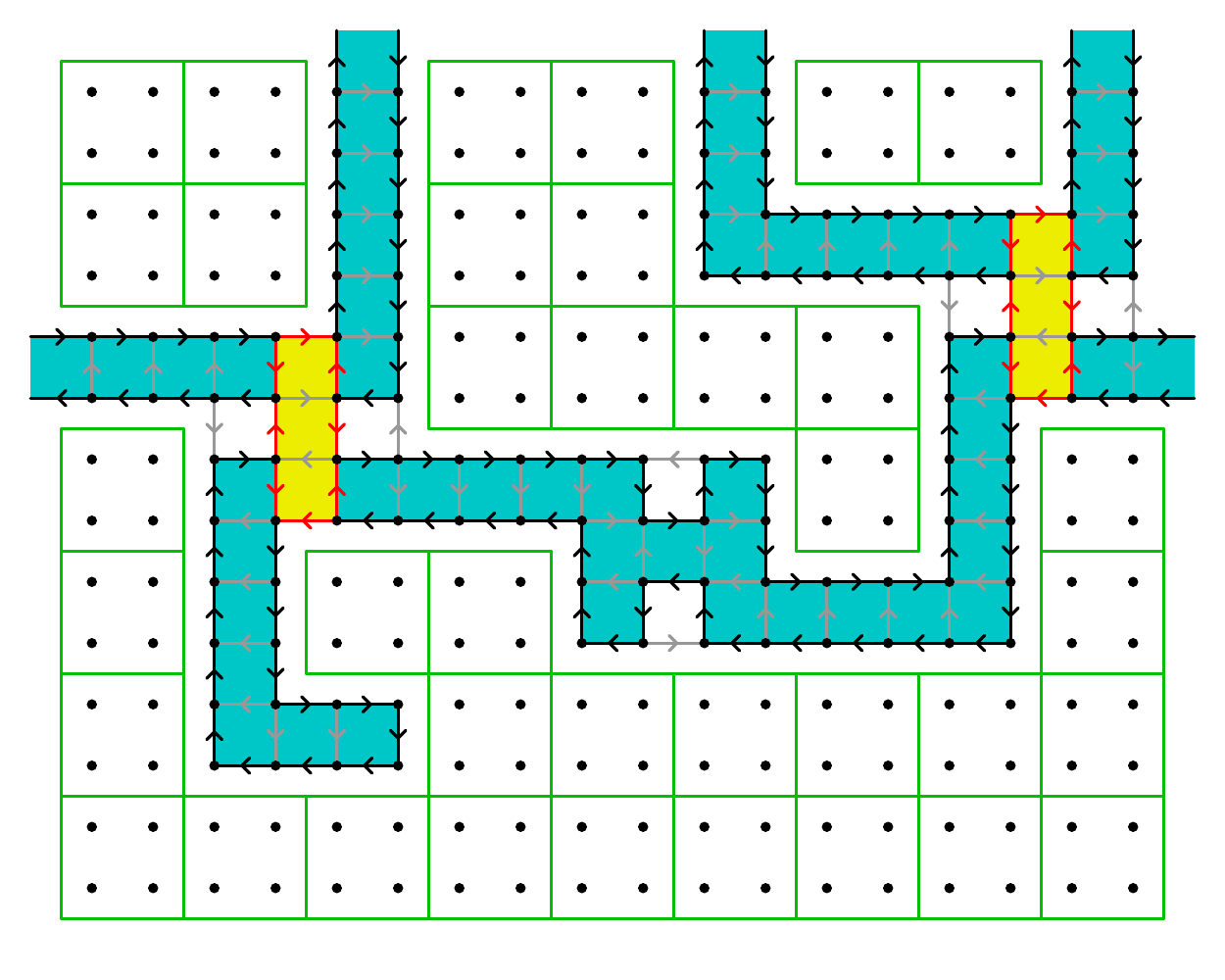}
    \caption{
      TRVB gadgets for directed grid graphs, showing
      two breakable degree-4 vertices connected by an edge
      and an unbreakable degree-1 vertex.
    }
    \label{fig:dirrect}
  \end{figure}

  We just need to ensure that gray edges cannot be used,
  which we can do by orienting them carefully.
  Ignoring the H-shaped construction in the center for the moment,
  each black edge is either the only edge pointing towards or the only edge
  pointing away from some vertex (depending on which side of the
  tentacle it's on), and thus must be used in a Hamiltonian cycle.
  We call such an edge \defn{forced}.
  Each gray edge (still ignoring the H) shares either its source or its target
  with a black edge, and thus cannot be used.
  We call such an edge \defn{unusable}.

  This requires the orientation of the gray edges relative to a tentacle
  to be different on the two ends of the tentacle,
  which is what the H achieves:
  one can verify by repeatedly finding forced edges
  and deleting unusuable edges that any Hamiltonian cycle
  must use all black edges and no gray edges in the H.
  Each tentacle representing an edge between two degree-4 breakable vertices
  will have such an H.

  This reduction proves a weaker version of the theorem:
  Finding Hamiltonian cycles on a directed grid graph is ASP-complete.
  It remains to fill all of the unused space to make a rectangular grid graph.

  If we place each vertex gadget, H, and turn on the same parity,
  the construction lies neatly on a $2\times 2$ grid,
  and in particular the holes are made of $2\times 2$ squares.
  Figure~\ref{fig:dirrect} indicates these squares in green.
  In addition, in each hole at least one of these squares
  is adjacent to a forced edge:
  all black edges except a few in each H are forced,%
  \footnote{
    They all become forced after deleting some unusable edges,
    but it's simpler to argue that hole filling works
    with directly forced edges.
  }
  and each hole is adjacent to a non-H section of tentacle
  provided we do not use any extremely short tentacles.

  Pick one such $2\times2$ square, and add four new vertices to fill it.
  Assume that the adjacent forced edge is the only outgoing edge from its source;
  the case where it is the only edge pointing towards its target is similar but with directions reversed.
  This situation is illustrated in Figure~\ref{fig:holefilling} (left),
  with the forced edge in blue.
  Now reverse the forced edge, and add new edges
  as shown on the right of Figure~\ref{fig:holefilling} (omitting any edges between a vertex
  in the square and a vertex outside it which doesn't yet exist).
  It is straightforward to check that all gray edges are unusable,
  so any Hamiltonian cycle must follow the blue path, which is equivalent
  to the original forced edge but consumes the added vertices.

  \begin{figure}
    \centering
    \includegratility[trim=15 15 15 15]{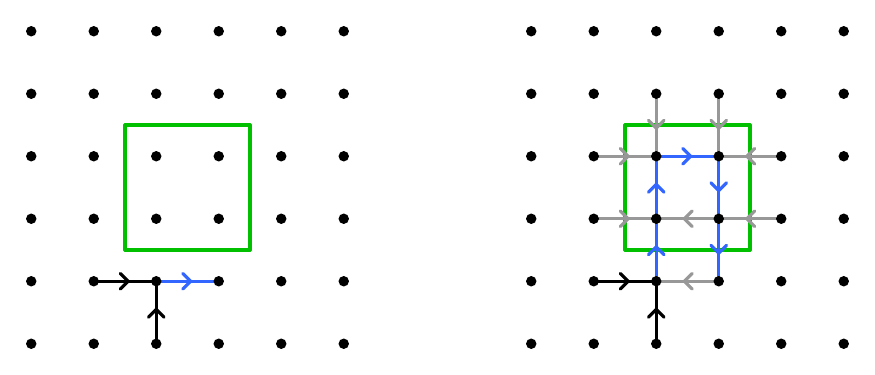}
    \caption{
      Filling holes in a directed rectangular grid graph.
    }
    \label{fig:holefilling}
  \end{figure}

  Filling this small portion of hole preserves the fact
  that every hole has a $2\times 2$ square adjacent to a forced edge,
  since the three relevant blue edges are forced.
  Thus we can repeat this process until all holes are filled,
  ultimately filling each hole with paths that outline a spanning forest
  of the $2\times 2$ squares.
  Figure~\ref{fig:filled} shows what this looks like
  after filling (the visible portion of) the top middle
  hole in Figure~\ref{fig:dirrect}.

  \begin{figure}
    \centering
    \includegratility[trim=15 30 15 15]{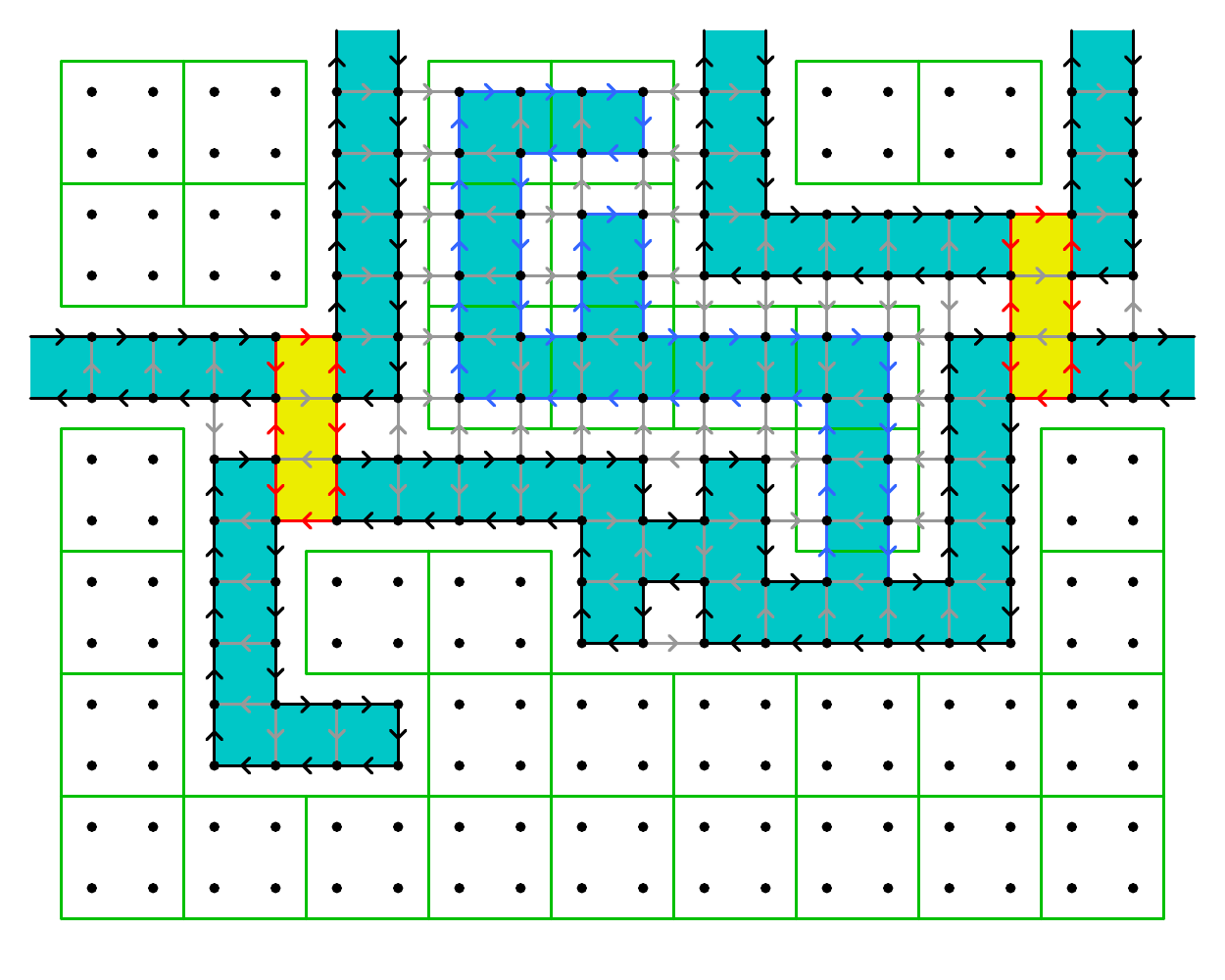}
    \caption{
      Figure~\ref{fig:dirrect} after some hole filling.
    }
    \label{fig:filled}
  \end{figure}

  The result is a directed rectangular grid graph which is equivalent
  to the original directed grid graph for the purposes of Hamiltonian cycles.
  Hence Hamiltonian cycles in the final graph correspond
  to solutions to the instance of TRVB.
\end{proof}

\subsection{Max-Degree-3 Spanning Subgraphs of Rectangular Grid Graphs}
\label{sec:spanning}

\begin{theorem}
  \label{thm:promise spanning}
  Let $G$ be a directed max-degree-3 spanning subgraph of a rectangular grid graph.
  Consider the promise problem of finding an \emph{undirected} Hamiltonian cycle on $G$,
  subject to the promise that all such cycles respect the given edge directions;
  that is, they would also be valid directed Hamiltonian cycles of $G$.
  This promise problem is ASP-complete.
\end{theorem}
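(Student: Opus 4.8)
The plan is to reduce from planar $(\{4\},\{1\})$-TRVB, which is ASP-complete by Theorem~\ref{thm:trvb41}, following the general strategy of Figure~\ref{fig:exvertex} and the correspondence of Lemma~\ref{lem:connections}. I would replace each degree-4 breakable vertex and each degree-1 unbreakable vertex of the TRVB instance with a grid gadget, connect the tentacles of adjacent gadgets along the rectangular grid, and argue that undirected Hamiltonian cycles of the resulting graph biject with vertex-breaking solutions. As the paragraph preceding Section~\ref{sec:rectangular} indicates, the work reduces to drawing these two gadget types in a directed max-degree-3 spanning subgraph of a rectangle while ensuring the tentacles do not interfere. The two new ingredients relative to Theorem~\ref{thm:directed rectangular} are the max-degree-3 constraint at every vertex and the fact that the target problem is undirected under the stated promise, so the orientations serve only to restrict which undirected cycles are possible rather than to be read off directly.

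For the breakable gadget I would use a small grid cycle carrying one tentacle out of each of its four sides, arranged so that each corner vertex has degree exactly~$3$: two cycle edges and one tentacle edge. As in Figure~\ref{fig:exvertex}, once the tentacles are forced, a Hamiltonian cycle must select one of the two alternating matchings of the cycle's edges, and these two local choices correspond exactly to not breaking versus breaking the vertex. The degree-1 unbreakable vertex becomes a forced dead-end tentacle. To pin down the tentacle interiors I would orient every tentacle edge so that along the body of each tentacle each edge is the unique outgoing (or unique incoming) edge at one of its endpoints; by Observation~\ref{obs:forced edges} all such edges are then forced, and the orientations can be chosen so that every stray edge is unusable. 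This reproduces the rail structure of Figure~\ref{fig:exvertex} with every vertex of degree at most~$3$, leaving the matchings at the squares as the only remaining freedom.

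The promise follows from the same forced/unusable bookkeeping. If the square edges inherit a consistent rotational orientation (exactly as in the face orientations of $G_1$ in Lemma~\ref{lem:connections}) and the tentacle interiors are forced, then each usable edge has a unique direction in which it can be traversed, so every \emph{undirected} Hamiltonian cycle of the construction is in fact a valid directed one; the construction therefore always satisfies the promise, and the bijection with TRVB solutions supplied by Lemma~\ref{lem:connections} is parsimonious, giving ASP-completeness. To turn an arbitrary grid layout into a spanning subgraph of an honest $m \times n$ rectangle, I would place all gadgets on a coarse grid so that every hole decomposes into $2\times 2$ blocks, and then fill the remaining lattice points exactly as in the proof of Theorem~\ref{thm:directed rectangular}: repeatedly attach each empty $2\times 2$ block to an adjacent forced edge by rerouting that edge through the four new vertices, which consumes them, keeps all degrees at most~$3$, and creates no new Hamiltonian cycles.

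I expect the main obstacle to be the interaction between the tight degree-3 budget and the forcing argument. At degree~$3$ there is far less slack than in the degree-4 rectangular construction, so orienting the tentacle, square, and hole-filling edges so that (i) exactly the intended edges are forced, (ii) all stray edges are unusable, and (iii) every undirected Hamiltonian cycle is compelled to respect the directions must be arranged simultaneously without ever exceeding degree~$3$ at any vertex, including those introduced during hole filling. Checking that this can be realized at every junction and tentacle crossing, and that the resulting correspondence with TRVB solutions is a genuine bijection rather than merely solution-preserving in one direction, is the crux of the argument.
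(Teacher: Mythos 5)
Your proposal is correct and follows essentially the same route as the paper: a reduction from planar $(\{4\},\{1\})$-TRVB using the degree-4 breakable and degree-1 unbreakable gadgets, forced tentacles, and the $2\times2$ hole filling of Theorem~\ref{thm:directed rectangular}, with the promise discharged via the orientation structure of Lemma~\ref{lem:connections}. The only difference is that the paper avoids the ``main obstacle'' you flag by simply observing that in the already-built construction of Theorem~\ref{thm:directed rectangular} every vertex is incident to at most three non-gray (usable) edges, so deleting the gray edges --- none of which lie on any Hamiltonian cycle --- immediately yields the desired max-degree-3 spanning subgraph with an unchanged solution set, and no fresh degree-3 forcing argument is needed.
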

\begin{proof}
  We modify the construction from Theorem~\ref{thm:directed rectangular} by simply removing all of the gray edges.
  Inspection of Figure~\ref{fig:filled} reveals that
  every vertex is incident to at most three non-gray edges:
  vertices along tentacles have two forced edges,
  and vertices in degree-4 vertex gadgets have one forced edge
  and two optional red edges.
  Filling holes preserves the non-gray degree of existing vertices
  and adds vertices with two non-gray edges.

  In the previous proofs, all of the possible solutions only used non-gray edges. Thus,
  we can adapt the previous reduction by simply deleting all gray edges,
  obtaining a directed max-degree-3 spanning subgraph of a rectangular grid graph.
  For instance, doing this to Figure~\ref{fig:filled}
  yields Figure~\ref{fig:cleaned}, which also has the advantage
  of being easier to read.

  By the proof of Lemma~\ref{lem:connections}, directed Hamiltonian cycles on $G$ are the same as undirected Hamiltonian cycles on $G$,
  and the set of such cycles is in bijection with solutions of the original TRVB instance.
\end{proof}

\begin{figure}
  \centering
  \includegratility[trim=15 30 15 15]{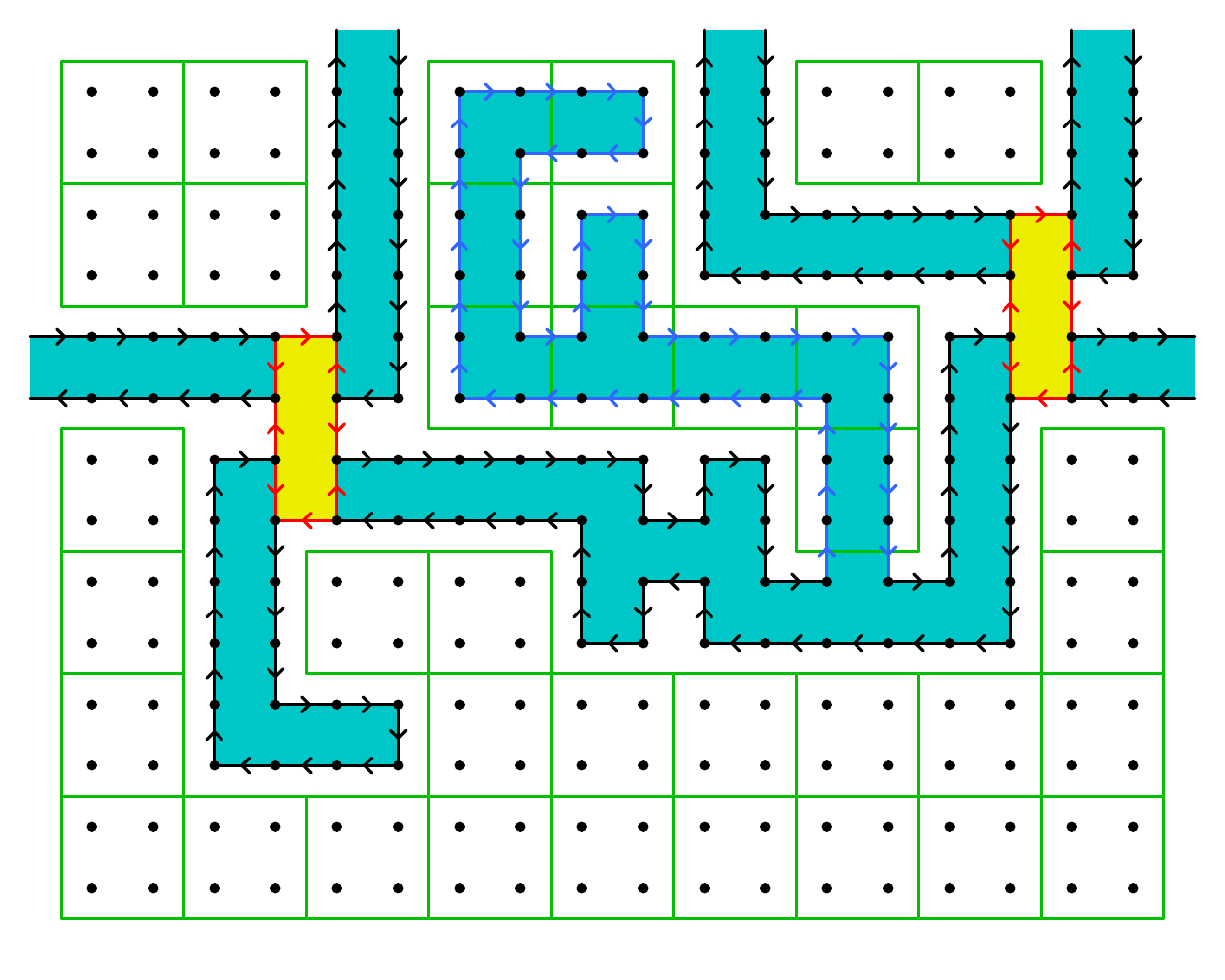}
  \caption{
    Figure~\ref{fig:filled} after removing gray edges.
  }
  \label{fig:cleaned}
\end{figure}

\begin{figure}
  \centering
  \includegratility[trim=15 30 15 15]{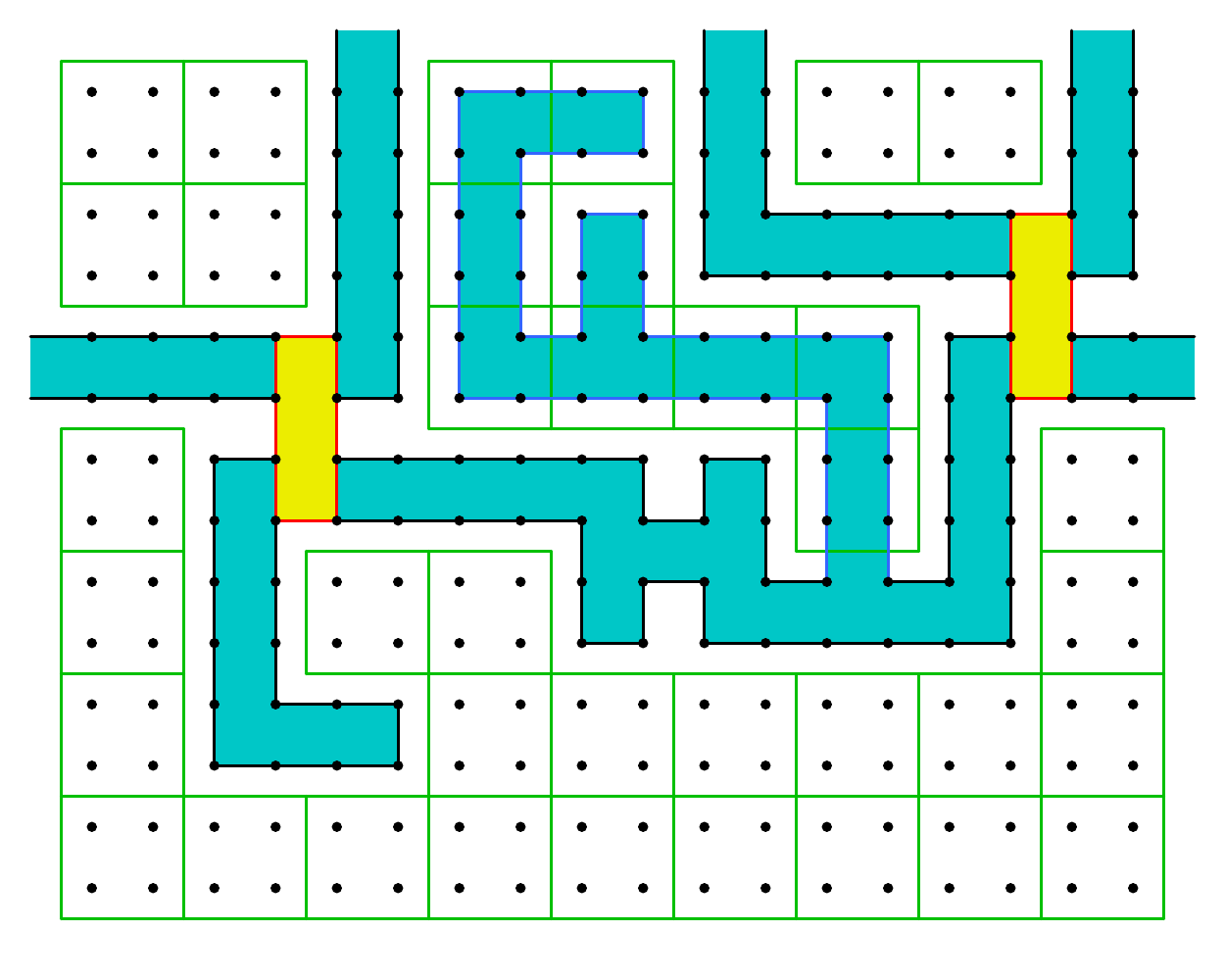}
  \caption{
    Figure~\ref{fig:cleaned} after forgetting directions of edges.
  }
  \label{fig:dedirected}
\end{figure}

\begin{corollary}
  \label{cor:directed spanning}
  Finding Hamiltonian cycles on a directed max-degree-3 spanning subgraph of a rectangular grid graph is ASP-complete.
\end{corollary}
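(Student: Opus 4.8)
The plan is to observe that the desired corollary is essentially a restatement of Theorem~\ref{thm:promise spanning} with the promise framing removed, so I would prove it by reusing that theorem's construction verbatim. Recall that the proof of Theorem~\ref{thm:promise spanning} takes an instance of planar $(\{4\},\{1\})$-TRVB and produces a directed max-degree-3 spanning subgraph $G$ of a rectangular grid graph, together with a bijection between the \emph{undirected} Hamiltonian cycles of $G$ and the solutions of the TRVB instance. I would carry over exactly this reduction and this bijection.

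The one point to check is that, on the graph $G$ produced, the directed Hamiltonian cycles coincide with the undirected Hamiltonian cycles. One inclusion is immediate: every directed Hamiltonian cycle of $G$ is in particular an undirected Hamiltonian cycle. For the converse, the promise established in Theorem~\ref{thm:promise spanning} is precisely that every undirected Hamiltonian cycle of $G$ respects the edge directions, i.e.\ is itself a directed Hamiltonian cycle; this in turn follows from the proof of Lemma~\ref{lem:connections}, where directed and undirected Hamiltonicity agree for graphs essentially of the form $G_1$. Hence the two sets of cycles are equal.

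With this equality in hand, the bijection from Theorem~\ref{thm:promise spanning} between undirected Hamiltonian cycles of $G$ and TRVB solutions is simultaneously a bijection between \emph{directed} Hamiltonian cycles of $G$ and TRVB solutions. Since planar $(\{4\},\{1\})$-TRVB is ASP-complete by Theorem~\ref{thm:trvb41}, composing the reduction with this bijection shows that finding directed Hamiltonian cycles on directed max-degree-3 spanning subgraphs of rectangular grid graphs is ASP-complete.

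I do not expect a genuine obstacle here: the construction does all of the work, and the only fact that needs verifying — that no undirected Hamiltonian cycle of $G$ violates the given orientations — was already handled when establishing the promise in Theorem~\ref{thm:promise spanning}. The corollary simply reinterprets that fact as a statement about the unqualified directed Hamiltonicity problem, so the proof should be a short paragraph rather than a fresh argument.
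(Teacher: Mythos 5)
Your proposal is correct and matches the paper's approach: the paper's entire proof is ``This is a special case of Theorem~\ref{thm:promise spanning},'' and your argument simply spells out why — on the instances produced by that theorem's construction, directed and undirected Hamiltonian cycles coincide, so the same bijection with TRVB solutions establishes ASP-completeness of the directed problem.
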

\begin{proof}
  This is a special case of Theorem~\ref{thm:promise spanning}.
\end{proof}

In the undirected case, we can strengthen the assumption about forced edges.
For undirected graphs, an edge is \defn{forced} if it is incident to a degree-2 vertex,
since both edges incident to such a vertex must be used in any Hamiltonian cycle.
A degree-3 vertex in a subgraph of a grid graph has two edges in opposite directions,
which we call \defn{side} edges,
and a third edge between them, which we call the \defn{center} edge.
In this case, we can assume not only that each degree-3 vertex has a forced edge,
but that this forced edge is a side edge,
further reducing the number of distinct vertices we need to simulate for an application.

\begin{theorem}
  \label{thm:undirected spanning asymmetric}
  Finding Hamiltonian cycles on an undirected max-degree-3 spanning subgraph of a rectangular grid graph is ASP-complete,
  even when every degree-3 vertex has a forced side edge.
\end{theorem}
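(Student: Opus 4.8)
The plan is to reuse the construction of Theorem~\ref{thm:promise spanning} essentially unchanged, merely forgetting the edge directions. Concretely, I would take the directed max-degree-$3$ spanning subgraph produced there (the graph of Figure~\ref{fig:cleaned}, obtained from the rectangular construction of Theorem~\ref{thm:directed rectangular} after deleting all gray edges) and replace every directed edge by an undirected one, yielding the graph of Figure~\ref{fig:dedirected}. This is already a spanning subgraph of a rectangular grid graph of maximum degree~$3$, so the only things left to check are (a) that the reduction stays parsimonious once directions are forgotten, and (b) that every degree-$3$ vertex of the resulting graph carries a forced side edge.

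For (a), the key point is that forgetting directions cannot create spurious undirected Hamiltonian cycles. After the gray edges are removed, the graph is exactly of the form $G_1$ from Lemma~\ref{lem:connections} (each bounded face is oriented consistently clockwise or counterclockwise), and by the proof of that lemma its directed Hamiltonian cycles coincide with the undirected Hamiltonian cycles that use every blue--white separating edge, which in turn biject with solutions of the TRVB instance. I would confirm this by the same forced-edge propagation used in the earlier proofs: in the undirected sense an edge is forced once it is incident to a degree-$2$ vertex, and starting from the forced black edges one can repeatedly deduce that an undirected Hamiltonian cycle must traverse each tentacle exactly as in the directed case, leaving only the two local choices at each degree-$4$ gadget. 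Hence the undirected and directed Hamiltonian cycles are the same set, and the bijection with TRVB solutions is preserved.

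For (b), I would classify the degree-$3$ vertices of Figure~\ref{fig:dedirected}. Tentacle vertices, hole-filling vertices, and the interior vertices of the $\mathrm H$-shaped sections all have degree~$2$, and the degree-$1$ unbreakable gadget contributes only a single dead-end vertex, so the only degree-$3$ vertices are the corner vertices of the degree-$4$ breakable gadgets. At such a corner, the two optional (red) edges are the two sides of the square gadget meeting at a right angle, hence point in perpendicular directions, while the third, forced edge runs out to an adjacent degree-$2$ tentacle vertex. Because the two red edges are perpendicular, the forced edge must be collinear with exactly one of them, and is therefore a side edge rather than the center edge. This yields the stronger conclusion for free.

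The main obstacle is one of verification rather than new construction: I must confirm, by inspecting the gadgets inherited from Theorem~\ref{thm:directed rectangular}, that at every degree-$3$ vertex the two non-forced edges genuinely turn a corner (are perpendicular), so that the forced edge is never the center edge. Should some gadget instead produce a degree-$3$ vertex whose two optional edges are collinear, the remedy would be a small local reshaping of that gadget to bend one optional edge through a right angle; but I expect the existing construction already enjoys the perpendicularity property throughout, so that no modification is required and the reduction of Theorem~\ref{thm:promise spanning} proves the stronger statement directly.
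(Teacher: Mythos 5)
There is a genuine gap, and it is exactly at the point you flag as ``verification rather than new construction.'' The condition that every degree-$3$ vertex has a forced side edge is equivalent to requiring that the cycle of optional (red) edges in each breakable-vertex gadget \emph{turns at every vertex}. In the degree-$4$ breakable gadget inherited from Theorem~\ref{thm:directed rectangular}, the red edges form the boundary of a rectangle, and each of the four tentacles attaches to that boundary at two degree-$3$ vertices. Most of these attachment points lie along the straight sides of the rectangle, not at its corners, so the two red edges there are collinear and the forced black edge into the tentacle is the \emph{center} edge, violating the required property. Your classification ``the only degree-$3$ vertices are the corner vertices of the degree-$4$ breakable gadgets'' is therefore false, and the hoped-for perpendicularity does not hold throughout the existing construction.

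Moreover, the fallback you propose --- ``a small local reshaping of that gadget to bend one optional edge through a right angle'' --- understates the difficulty. A red cycle that turns at every vertex is a staircase polygon, and one also needs an even-length cycle whose two alternating solutions respectively connect all tentacles and disconnect them all; the paper states explicitly that it was unable to build a degree-$4$ breakable vertex under these constraints. Its actual resolution is structurally different: it changes the source problem to planar $(\{6\},\emptyset)$-TRVB (shown ASP-complete in Theorem~\ref{thm:trvb6}) and designs a new degree-$6$ breakable vertex gadget (Figure~\ref{fig:undir6}) whose red cycle turns at every vertex, then connects gadgets and fills holes as before (hole filling only adds degree-$2$ vertices, so it is harmless). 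Your part (a) --- that forgetting directions preserves parsimony --- is fine and matches Theorem~\ref{thm:promise spanning}, but the theorem's added ``forced side edge'' guarantee requires this new gadget and a different reduction, not a reuse of the degree-$4$ construction.
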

\begin{proof}
  We are not able to directly build breakable degree-4 TRVB vertices under these constraints.
  However, we are able to build a breakable degree-6 vertex,
  so we reduce from planar $(\{6\},\emptyset)$-TRVB,
  which was shown ASP-complete in Theorem~\ref{thm:trvb6}.

  Our breakable degree-6 vertex gadget is shown in Figure~\ref{fig:undir6}.
  Black edges are forced, and red edges are optional.
  Note that vertices in tentacles all have degree 2,
  and each degree-3 vertex inside the vertex gadget
  has a forced side edge. This is equivalent
  to the cycle of red edges turning at every vertex.
  The vertex gadget has exactly two local solutions,
  which each use alternating red edges.

  As before, blue tentacles are inside the cycle,
  and the yellow region is inside the cycle in one of the local solutions,
  corresponding to not breaking the TRVB vertex.
  We have new color as well: the green squares are inside the cycle
  in the other solution, when the TRVB vertex is broken.
  It is clear by inspection that the yellow local solution
  connects all six tentacles, and the green local solution disconnects them all.

  \begin{figure}
    \centering
    \includegratility[trim=15 15 15 15]{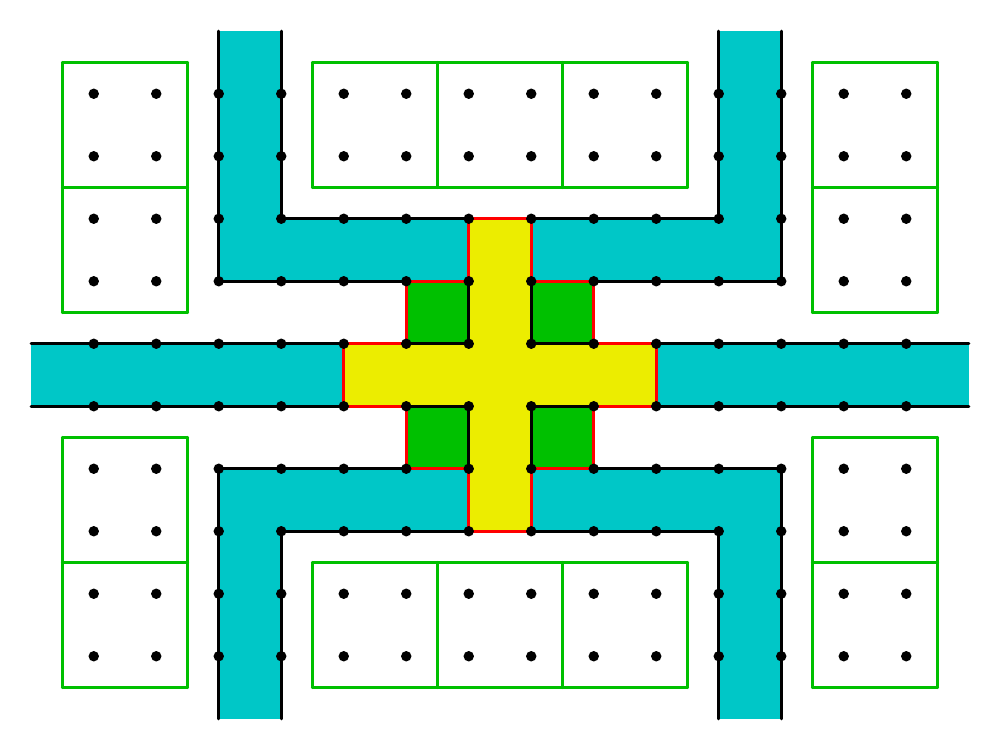}
    \caption{
      A breakable degree-6 TRVB vertex gadget for undirected
      max-degree-3 spanning subgraphs of rectangular grid graphs.
    }
    \label{fig:undir6}
  \end{figure}

  Finally, we connect vertex gadgets along tentacles and fill holes
  in exactly the same way as before.
  Filling holes uses only degree-2 vertices,
  so it does not introduce degree-3 vertices without forced side edges.
\end{proof}

\subsection{Max-Degree-3 Grid Graphs}
\label{sec:induced}

The existing proof of NP-hardness for finding Hamiltonian cycles
in max-degree-3 grid graphs \cite{papadimitriou} has only
one nonparsimonious gadget, the `fork connection'.
However, the reduction can be simplified
to avoid this gadget, making it parsimonious.
We will sketch the parsimonious version,
but we will also provide a different proof using TRVB
which yields the useful property that every vertex has a forced edge.

For the parsimonious adaptation of Papadimitriou and Vazirani's \cite{papadimitriou} proof,
we reduce from finding Hamiltonian cycles on non-alternating indegree-2 outdegree-2 planar graphs,
which is ASP-complete (Theorem~\ref{thm:in2out2}) \cite{trvb}.
See Figure~\ref{fig:dumbbell}.
Given such a graph, replace each vertex with a \defn{dumbbell},
which consists of two small loops called \emph{in} and \emph{out} connected by a path.
Each edge is represented by a \defn{tentacle} which connects
the out-loop of its source to the in-loop of its target.
These connections are slightly different,
and are shown with blue and red tentacles, respectively,
in Figure~\ref{fig:dumbbell}.

There is a lot of freedom in the placement of dumbbells,
but the parity of the position of each loop is important:
for tentacles to connect properly,
the in-loop must have white corners and the out-loop must have black corners.

\begin{figure}
  \centering
  \includegraphics[width=.3\linewidth,angle=180,trim=15 15 15 15]{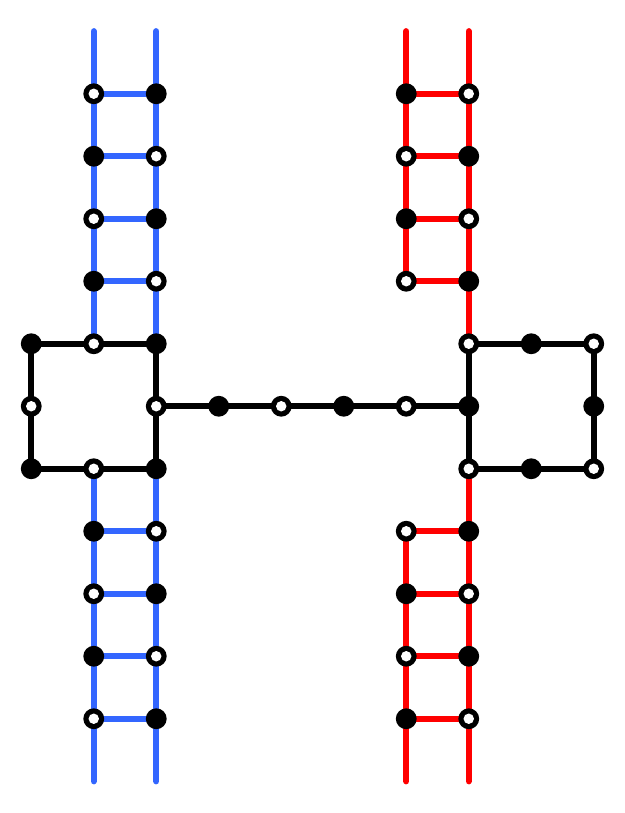}
  \caption{
    A dumbbell (black) with two red tentacles attached to its in-loop
    and two blue tentacles attached to its out-loop.
    Vertices are colored black and white in a checkerboard.
  }
  \label{fig:dumbbell}
\end{figure}

This works because each tentacle has only two solutions:
one in which the Hamiltonian cycle zigzags along it,
and one in which it loops back to the out-loop.
These correspond to the edge being used and unused, respectively.
Any Hamiltonian cycle in the grid graph must:
arrive at a dumbbell from a red tentacle,
go around the in-loop, cross the dumbbell to the out-loop,
go down and back one of the blue tentacles back to this dumbbell,
go around the out-loop to the other blue tentacle,
zigzag down that blue tentacle,
and finally arrive at the next dumbbell.
The sequence of dumbbells gives a Hamiltonian cycle on the original graph.

\begin{theorem} \label{thm:max deg 3 undir}
  Finding Hamiltonian cycles on an undirected max-degree-3 grid graph is ASP-complete,
  even when every vertex has a forced edge.
\end{theorem}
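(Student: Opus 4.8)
The plan is to give a reduction from planar $(\{4\},\{1\})$-TRVB, which is ASP-complete by Theorem~\ref{thm:trvb41}, following the same template as the earlier grid-graph results and the core idea of Figure~\ref{fig:exvertex}. Each breakable degree-$4$ vertex becomes a small gadget whose boundary loop, in its two local states, either links all four incident tentacles together (corresponding to \emph{not} breaking the vertex) or caps each tentacle off separately (corresponding to breaking it); each degree-$1$ unbreakable vertex becomes a tentacle terminating in a fixed dead-end cap. Connecting the tentacles of adjacent TRVB vertices and invoking Lemma~\ref{lem:connections} then shows that Hamiltonian cycles of the constructed graph are in bijection with TRVB solutions, which establishes parsimony once the gadgets are verified correct.

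The essential new difficulty, compared with the spanning-subgraph constructions of Sections~\ref{sec:rectangular} and~\ref{sec:spanning}, is that a grid graph is an \emph{induced} subgraph of the lattice: we control only which vertices are present, and every pair of present vertices at distance~$1$ is automatically joined by an edge. So I would design the degree-$4$ breakable and degree-$1$ unbreakable gadgets entirely through vertex placement, arranging the occupied cells so that (a) the induced edges realize exactly the tentacle paths together with the small alternating loop that gives the breakable vertex its two local solutions, (b) every edge we wish to forbid is absent because one of its endpoints is an omitted lattice point, and (c) no vertex ever has all four lattice neighbors present, guaranteeing maximum degree~$3$. Tentacles are paths of degree-$2$ vertices, run at a safe spacing so that neighboring tentacles never induce stray adjacencies. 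Because a max-degree-$3$ grid graph carries no rectangularity requirement, no hole-filling step is needed here --- unlike in Theorem~\ref{thm:directed rectangular} --- which simplifies the layout considerably.

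To obtain the stronger conclusion that \emph{every} vertex has a forced edge, I would use the fact that in the undirected setting any edge incident to a degree-$2$ vertex is automatically forced. Every tentacle vertex has degree~$2$, so both of its edges are forced outright; the only vertices that could fail the property are the degree-$3$ vertices inside the breakable-vertex gadget. I would therefore lay out that gadget so that each of its degree-$3$ vertices is adjacent to a degree-$2$ vertex (a tentacle vertex or an internal one), forcing one of its incident edges. This is the same ``forced side edge'' bookkeeping used in Theorem~\ref{thm:undirected spanning asymmetric}, now carried out under the induced-subgraph constraint.

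I expect the main obstacle to be exactly this gadget design under the induced-subgraph constraint: since we cannot delete an individual edge, each desired absence must be engineered by omitting a vertex, and one must simultaneously check that the breakable gadget still has precisely two local completions (with no accidental adjacency creating or destroying one), that no vertex reaches degree~$4$, and that every vertex retains a forced edge. Once a correct pair of gadgets is exhibited --- most cleanly via an explicit figure, as in the other subsections --- correctness and parsimony follow directly from Lemma~\ref{lem:connections}, and ASP-completeness follows from Theorem~\ref{thm:trvb41}.
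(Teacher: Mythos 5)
Your proposal matches the paper's proof essentially exactly: it reduces from planar $(\{4\},\{1\})$-TRVB, builds a breakable degree-$4$ vertex gadget with exactly two local solutions (linking versus capping the four tentacles), obtains degree-$1$ unbreakable vertices by capping a tentacle, keeps the two sides of each tentacle at distance at least $2$ so the cycle cannot leak across under the induced-subgraph constraint, and gets the forced-edge property from degree-$2$ tentacle vertices and the forced black edges inside the gadget. The only missing piece is the explicit gadget itself (the paper's Figure~\ref{fig:undirgrid}), which you correctly identify as the remaining work.
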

\begin{proof}
  This proof is sketched, and its key gadget is shown, by Demaine and Rudoy \cite{trvb},
  but at the time TRVB was not known to be ASP-complete,
  so it was purely a simpler proof of NP-hardness
  used to motivate the usefulness of TRVB.

  Like most of our other proofs, we reduce from planar $(\{4\},\{1\})$-TRVB.
  Our breakable degree-4 vertex gadget is shown in Figure~\ref{fig:undirgrid}.
  The main difficulty in this case
  is that we need the paths on each side of a tentacle to be separated
  by distance at least 2,
  so that the cycle cannot cross between the two sides
  (and all tentacle edges are forced).
  As usual, black edges are forced, and there are exactly two solutions
  which each use alternating red edges.
  One solution puts the green region inside the cycle,
  and one puts the yellow region inside the cycle,
  corresponding to breaking and not breaking the vertex, respectively.

  \begin{figure}
    \centering
    \includegratility[trim=15 15 15 15]{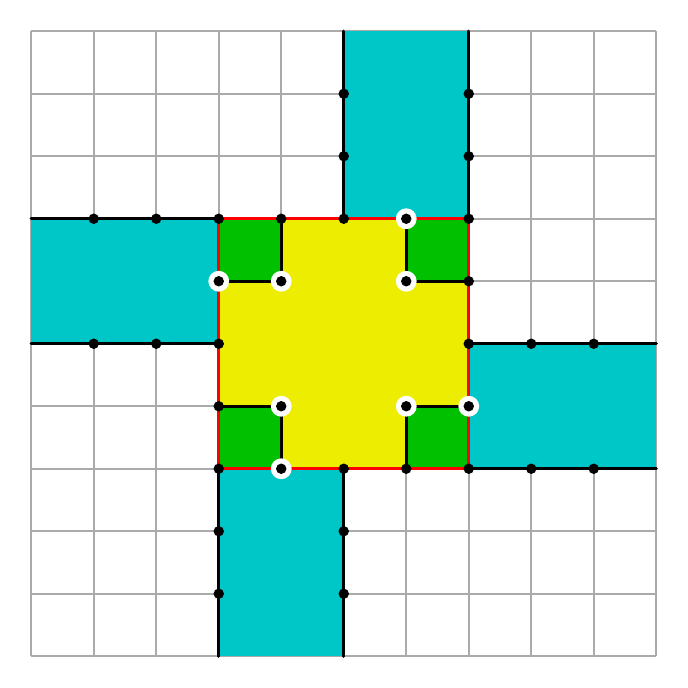}
    \caption{
      A breakable degree-4 TRVB vertex gadget
      for undirected max-degree-3 grid graphs.
      Removing the vertices highlighted in white gives an unbreakable degree-4 vertex gadget.
    }
    \label{fig:undirgrid}
  \end{figure}

  A degree-1 unbreakable vertex can be made by simply `capping off' a tentacle.
  Alternatively, we could reduce from $(\{4\},\{4\})$-TRVB,
  and construct a degree-4 unbreakable vertex gadget
  by removing the vertices highlighted in white from Figure~\ref{fig:undirgrid}.
\end{proof}

\begin{theorem} \label{thm:max deg 3 dir}
  Finding Hamiltonian cycles on a directed max-degree-3 grid graph is ASP-complete,
  even when every vertex has a forced edge.
\end{theorem}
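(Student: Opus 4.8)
The plan is to mirror the undirected argument of Theorem~\ref{thm:max deg 3 undir}, reducing from planar $(\{4\},\{1\})$-TRVB (ASP-complete by Theorem~\ref{thm:trvb41}) and reusing the very same grid-graph construction, but now equipping each edge with a direction. First I would observe that the vertex placement of Figure~\ref{fig:undirgrid}, together with the hole-filling, already produces a max-degree-3 grid graph with no unwanted induced edges, so the only new work is to choose an orientation of the existing edges under which directed and undirected Hamiltonian cycles coincide. Note that orienting changes neither the underlying graph nor its degrees, so whatever we produce is automatically a max-degree-3 \emph{directed} grid graph.

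The key step is to show that the undirected construction admits a \emph{global} orientation that every Hamiltonian cycle respects: then each undirected Hamiltonian cycle becomes a directed one, and since a directed Hamiltonian cycle is in particular an undirected cycle, the two solution sets (and hence their bijection to TRVB solutions) are identical. I would establish this by arguing that every edge is traversed in the same direction by every Hamiltonian cycle that uses it. For the black (forced) edges this is immediate, since each separates an always-inside (blue) region from an always-outside (white, including the exterior) region, which fixes the inside on one side and hence the direction. For the red (optional) edges I would use that each vertex gadget has exactly two local solutions employing \emph{disjoint}, alternating sets of red edges: a given red edge therefore appears in only one of the two local states of its gadget, so whenever it lies on a Hamiltonian cycle the inside/outside split across it---and thus its traversal direction---is the same. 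Orienting each edge in this canonical direction yields the desired directed grid graph.

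Finally I would observe that this orientation is automatically compatible with the forced-edge property: every vertex still has a forced edge, and that forced edge becomes the unique incoming or unique outgoing edge at one of its endpoints, so Observation~\ref{obs:forced edges} keeps the intended edges forced and renders the alternative traversals illegal. The bijection with TRVB solutions is then inherited verbatim from Theorem~\ref{thm:max deg 3 undir} via Lemma~\ref{lem:connections}, giving ASP-completeness even under the constraint that every vertex has a forced edge.

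The main obstacle I anticipate is the consistency claim for the red edges at the tentacle--gadget interfaces: one must check that the ``alternating red'' phase of a gadget and the forced structure of its incident tentacles fit together so that the inside/outside assignment across each red edge is genuinely invariant over all \emph{global} solutions, not merely the local ones, and that no vertex ends up with indegree or outdegree $0$. An alternative that sidesteps this bookkeeping is to build a directed degree-4 breakable (and degree-1 unbreakable) vertex gadget from scratch, in the spirit of Theorem~\ref{thm:directed rectangular}, using edge directions---rather than absent edges---to force the intended edges and to make the unwanted traversals unusable; this trades the global consistency argument for the purely local verification of a single new gadget figure.
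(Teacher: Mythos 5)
Your proposal is correct in outline, but it takes a genuinely different route from the paper. The paper's proof of Theorem~\ref{thm:max deg 3 dir} is exactly your ``alternative'': it designs a fresh directed degree-4 breakable vertex gadget (Figure~\ref{fig:dirgrid}), again reduces from planar $(\{4\},\{1\})$-TRVB, and verifies everything locally --- black edges are forced via Observation~\ref{obs:forced edges} and gray edges are made unusable by sharing a source or target with a forced edge, with tentacle sides still kept geometrically separated. Your primary plan --- reuse the undirected gadget of Figure~\ref{fig:undirgrid} and argue that the whole construction admits a global orientation (blue faces one way, white faces the other, in the spirit of $G_1$ from Lemma~\ref{lem:connections}) that every undirected Hamiltonian cycle respects --- is sound: since the undirected analysis already shows every Hamiltonian cycle is a combination of the intended local solutions, each black edge always separates an always-inside region from an always-outside one, and each red edge appears in only one of the two (disjoint, alternating) local states of its gadget, so its inside/outside split when used is invariant; hence directed and undirected Hamiltonian cycles coincide and the TRVB bijection is inherited. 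This is the same mechanism the paper itself uses in Theorem~\ref{thm:promise spanning}. What the paper's choice buys is a purely local verification (one picture to check) and a more compact gadget, since edge directions can kill unwanted induced edges that the undirected gadget must avoid by spacing; what your global-orientation argument buys is that no new figure is needed at all. One small point worth making explicit in your write-up: the ``every vertex has a forced edge'' clause is essentially automatic in the directed max-degree-3 setting, since any vertex with indegree and outdegree both at least $1$ and total degree at most $3$ has a unique in-edge or a unique out-edge.
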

\begin{proof}
  The proof is extremely similar to the previous proof.
  We again reduce from $(\{4\},\{1\})$-TRVB.
  Our degree-4 breakable vertex gadget is shown in Figure~\ref{fig:dirgrid},
  and a degree-1 unbreakable vertex can again be made by capping off a tentacle.
  Black edges are forced and gray edges are unusable.
  We again keep the sides of a tentacle apart from each other
  (away from vertex gadgets)
  so that a cycle cannot leak between them.

  \begin{figure}
    \centering
    \includegratility[trim=15 15 15 15]{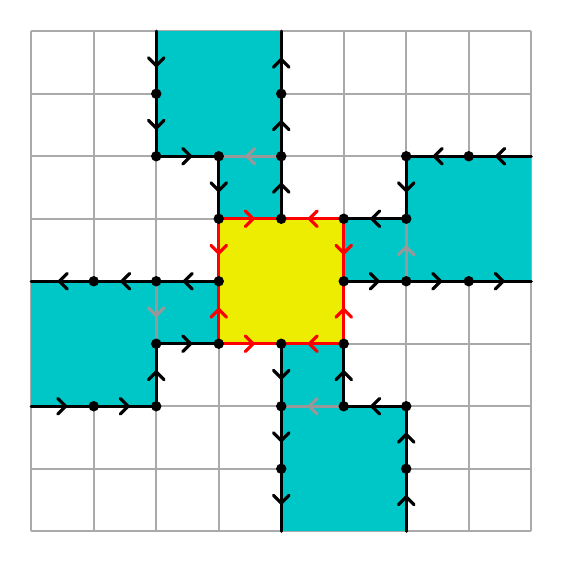}
    \caption{
      A breakable degree-4 TRVB vertex gadget
      for directed max-degree-3 grid graphs.
    }
    \label{fig:dirgrid}
  \end{figure}

  As before, there are exactly two solutions to the vertex gadget,
  one of which put the yellow square inside the cycle
  corresponding to leaving the TRVB vertex unbroken.
\end{proof}

\section{T-Metacells}
\label{sec:T-metacells}

Many puzzle genres which involve drawing a single loop are proven hard using reductions from various forms of grid graph Hamiltonicity. Tang \cite{tang} described a simple ``T-metacell'' framework for proving NP-hardness of these puzzles using grid graph Hamiltonicity.
A T-metacell is a gadget which represents a single degree-3 vertex in a grid graph. Each T-metacell is a (usually square) tile with 3 exits (on 3 of the 4 sides) such that the loop may traverse the gadget between any pair of exits. The gadget should be reflectable and rotatable, and the loop may travel between adjacent T-metacells only when both have exits along their shared border. Finally, the loop must be required to visit every T-metacell.

It's straightforward to see how T-metacells can simulate degree-3 vertices in a Hamiltonicity reduction; Tang showed that they can also simulate degree-2 vertices.
Let $G$ be a subgraph of a grid graph in which every vertex has degree 2 or 3.
Degree-3 vertices of $G$ can be replaced directly with T-metacells.
To handle degree-2 vertices, consider the graph $H$ on the same vertex set as $G$ which has an edge between two lattice-adjacent vertices precisely when $G$ is missing that edge.
Then $H$ consists of degree-1 and degree-2 vertices.
Orient the edges of $H$ into directed paths and cycles such that each vertex has a maximum indegree and outdegree of 1.
Each degree-2 vertex of $G$ can now be replaced by a T-metacell with its extra edge facing in the direction of the outward-pointing edge from that vertex in $H$.
This ensures that this extra exit will always be facing a non-exit in the adjacent cell, so only the intended edges of $G$ may be used by the loop.

We apply our results from Section~\ref{sec:grid graphs} to show that solving T-metacell problems is ASP-complete, instead of just NP-hard. We extend the framework to allow for some exits of a T-metacell to be \defn{directed}, meaning that the loop must have a consistent orientation which agree with the directions of the exits it uses.  We also allow for T-metacells to have one \defn{forced exit} through which the loop must pass. Note that when all three exits are directed, these necessarily create a forced exit: there must be either a lone exit directed inwards or a lone exit directed outwards, which in either case must be chosen.
T-metacells with forced edges can be classified into two categories: symmetric and asymmetric.
A symmetric T-metacell has its two unforced edges directly opposite each other, while an asymmetric T-metacell has its two unforced edges adjacent.
We use this classification to reduce the number of distinct gadgets which need to be constructed to apply the framework.

\begin{corollary} \label{cor:tundir}
  Finding Hamiltonian cycles on a rectangular grid of undirected T-metacells is ASP-complete.
\end{corollary}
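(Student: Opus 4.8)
The plan is to reduce from the problem of Theorem~\ref{thm:undirected spanning asymmetric} --- finding Hamiltonian cycles on an undirected max-degree-3 spanning subgraph $G$ of a rectangular grid graph, which is ASP-complete --- and to convert such an instance into a rectangular grid of undirected T-metacells using exactly the simulation described at the start of this section. The only property of $G$ I need beyond its being a spanning subgraph of a full rectangle is that every vertex has degree $2$ or $3$; this holds for the instances produced in Theorem~\ref{thm:undirected spanning asymmetric} (vertex gadgets contribute degree-3 vertices, while tentacles and filled holes contribute degree-2 vertices), and in any case a vertex of degree less than $2$ can never lie on a Hamiltonian cycle, so we may freely assume it.

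First I would place one T-metacell at each lattice point of the rectangle, so that the metacells form a full $m \times n$ grid. For each degree-3 vertex of $G$, I orient its T-metacell so that its three exits face exactly the three incident edges of $G$, leaving the missing side as a non-exit. For each degree-2 vertex of $G$, I assign its two real exits to the two incident edges of $G$ and route the mandatory third (extra) exit as prescribed by the framework: form the complementary graph $H$ of grid edges absent from $G$, orient each of its components (paths and cycles) so that every vertex has in- and out-degree at most~$1$, and aim each degree-2 metacell's extra exit along its outgoing $H$-edge (or, for a boundary vertex with no such edge, out of the rectangle). This guarantees that across every absent grid edge at least one of the two incident metacells presents a non-exit, so the loop can never travel along it.

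With the instance built, I would prove correctness by exhibiting the bijection between Hamiltonian cycles of $G$ and Hamiltonian cycles of the T-metacell grid. A Hamiltonian cycle of $G$ uses exactly two edges at each vertex; routing each metacell between the matching pair of exits (possible since a T-metacell connects any two of its three exits) yields a loop visiting every metacell. Conversely, every Hamiltonian loop of the grid visits each metacell and, by the previous paragraph, uses only exits corresponding to edges of $G$, hence induces a Hamiltonian cycle of $G$. These two maps are mutually inverse and polynomial-time computable in both directions, so the reduction is parsimonious; combined with the ASP-completeness of the source problem --- and the immediate fact that the T-metacell problem is itself an NP search problem --- this yields ASP-completeness.

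The main obstacle I anticipate is not the high-level correspondence but the bookkeeping that makes the reduction \emph{truly} parsimonious: I must verify that the extra exit of each degree-2 metacell is genuinely dead in every candidate loop, so that no spurious Hamiltonian cycle is introduced and the solution map remains a bijection rather than merely a surjection. This rests entirely on the orientation of $H$ having in- and out-degree at most~$1$, together with the correct handling of the rectangle's boundary cells --- in particular the corners, which are forced to be degree-2 and whose extra exit must face outward. Once these cases are checked, the remainder of the argument is routine.
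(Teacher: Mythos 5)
Your proposal is correct and follows essentially the same route as the paper: reduce from Theorem~\ref{thm:undirected spanning asymmetric}, replace each vertex with a T-metacell, and neutralize the extra exit of each degree-2 cell via the orientation of the complement graph $H$ exactly as described in the framework setup. The paper's proof is just a two-sentence appeal to that setup, so your more detailed treatment of the bijection and the boundary/corner cases is a faithful elaboration rather than a different argument.
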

\begin{proof}
  We reduce from finding Hamiltonian cycles on max-degree-3 spanning subgraphs of rectangular grid graphs (Theorem~\ref{thm:undirected spanning asymmetric}). Replace each vertex with a undirected T-metacell,
  handling degree-2 vertices as described above.
\end{proof}

\begin{corollary} \label{cor:tdirforced}
  Finding Hamiltonian cycles on a rectangular grid of required-edge directed T-metacells is ASP-complete.
\end{corollary}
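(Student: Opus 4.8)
The plan is to mimic the proof of Corollary~\ref{cor:tundir}, but to reduce from a \emph{directed} source so that the resulting T-metacells come out directed. The natural source is Corollary~\ref{cor:directed spanning}: finding Hamiltonian cycles on a directed max-degree-3 spanning subgraph of a rectangular grid graph is ASP-complete. Such a graph $G$ has two features I would exploit. First, its vertex set is an entire $m \times n$ rectangle, so after replacing each vertex by a T-metacell we obtain a genuinely rectangular grid of metacells rather than one riddled with holes. Second, the construction behind Theorem~\ref{thm:directed rectangular} (with gray edges deleted) produces only vertices of degree $2$ or $3$, which is exactly the regime the T-metacell framework of Section~\ref{sec:T-metacells} is built to handle. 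Notably, I would \emph{not} need any hypothesis about forced edges in $G$: as the excerpt observes, any T-metacell all of whose exits are directed automatically has a forced exit (the lone inward or lone outward exit), so directedness alone makes the metacells required-edge for free.

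First I would replace each degree-$3$ vertex of $G$ by a T-metacell whose three exits are directed to match the orientations of the three incident edges of $G$; this is a required-edge directed T-metacell with no extra work. Degree-$2$ vertices I would handle exactly as in the framework: form the complement graph $H$ of missing lattice edges, orient $H$ into paths and cycles so each vertex has in- and out-degree at most $1$, and give each degree-$2$ metacell its extra (phantom) exit on the side of the outward-pointing edge of $H$, so that this exit always faces a non-exit of the neighboring cell. The two real exits inherit their directions from $G$, and the phantom exit may be directed arbitrarily.

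The one point needing care is that this degree-$2$ handling still yields a \emph{valid} required-edge directed T-metacell: the loop can never pass through an exit that faces a non-exit, so the forced exit must not be the phantom one. This is automatic, since a degree-$2$ vertex contributes one inward and one outward real exit. Whichever way the phantom is oriented, the metacell is of type $2$-in/$1$-out or $1$-in/$2$-out with its lone (forced) exit being a real edge; the phantom, facing a non-exit, is simply never used. Hence both real edges are traversed and the phantom is inert, reproducing the intended degree-$2$ behavior.

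Finally I would invoke the framework's correspondence together with Lemma~\ref{lem:connections}: Hamiltonian cycles of the rectangular T-metacell grid are in bijection with directed Hamiltonian cycles of $G$ (the exit directions force the loop's orientation to agree with those of $G$), which are in turn in bijection with solutions of the underlying TRVB instance. Since every step of the reduction is a polynomial-time bijection on solutions, the whole reduction is parsimonious, and ASP-completeness follows. I expect the only genuine obstacle to be the bookkeeping of the previous paragraph --- confirming that the phantom exit is truly idle and that the forced exit always lands on a real edge --- rather than any new gadget construction.
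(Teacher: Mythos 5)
Your proposal is correct and follows essentially the same route as the paper: reduce from Corollary~\ref{cor:directed spanning}, place a directed T-metacell at each degree-3 vertex, handle degree-2 vertices via the framework's complement-graph orientation with the phantom exit directed arbitrarily, and observe that directedness supplies the forced exit for free. The paper's own proof is just a terser version of this same argument.
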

\begin{proof}
  We reduce from finding Hamiltonian cycles on directed max-degree-3 spanning subgraphs of rectangular grid graphs (Corollary~\ref{cor:directed spanning}).
  Place a T-metacell for each degree-3 vertex,
  and handle degree-2 vertices in the same way as above.
  The direction of the unusable edge on a T-metacell at a degree-2 vertex
  can be arbitrary.
\end{proof}

\begin{corollary} \label{cor:tundirforced}
  Finding Hamiltonian cycles on a rectangular grid of asymmetric required-edge undirected T-metacells is ASP-complete.
\end{corollary}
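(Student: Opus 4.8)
The plan is to follow the pattern of Corollaries~\ref{cor:tundir} and~\ref{cor:tdirforced}, reducing this time from Theorem~\ref{thm:undirected spanning asymmetric}. The point is that the strengthened guarantee of that theorem---that every degree-3 vertex has a forced \emph{side} edge---is precisely tuned to the asymmetric required-edge case. Recall that a degree-3 vertex of a grid graph has two collinear side edges and one perpendicular center edge. If we place a T-metacell at such a vertex with its required exit aligned to the forced side edge, then its two unforced exits are the opposite side edge and the center edge; since these occupy adjacent sides of the tile, the gadget is exactly asymmetric. Thus a single asymmetric required-edge T-metacell (together with its rotations and reflections) realizes every degree-3 vertex directly.

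First I would take a max-degree-3 spanning subgraph $G$ of a rectangular grid graph in which every vertex has degree 2 or 3 and every degree-3 vertex has a forced side edge, and replace each degree-3 vertex by this gadget as above. I would handle degree-2 vertices with the general device of Section~\ref{sec:T-metacells}: orienting the complementary graph $H$ so that each degree-2 vertex of $G$ acquires an outward direction along a missing grid edge, and placing a T-metacell whose extra exit points that way, so that it faces a non-exit of the neighboring cell.

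The step needing the most care is confirming that these degree-2 placements are compatible with the asymmetric required-edge restriction. Because both edges of $G$ at a degree-2 vertex are used in every Hamiltonian cycle, we may freely nominate either as the tile's required exit without changing the solution set, so parsimony is unaffected; what must be verified is that, for whichever direction the framework assigns the extra exit, the required exit can be chosen among the two used sides so as to leave the two unforced exits adjacent. This is a short case check on the two local shapes. The three occupied sides of the tile always contain exactly one pair of opposite sides, and at least one member of that pair is a used side (both members are used when the loop passes straight through, and exactly one is used when it turns); declaring that member the required exit makes the other two exits adjacent, as required, and it faces a genuine neighbor because it is an edge of $G$. Hence every vertex is realized by the asymmetric required-edge gadget, Hamiltonian cycles of the T-metacell grid correspond bijectively to those of $G$ (the required exits merely re-encode already-forced edges, and the extra exits are inert), and the reduction is parsimonious, giving ASP-completeness.
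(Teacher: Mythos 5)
Your proposal is correct and follows essentially the same route as the paper: reduce from Theorem~\ref{thm:undirected spanning asymmetric}, realize each degree-3 vertex as an asymmetric tile by aligning the required exit with its forced side edge, and handle degree-2 vertices by observing that at least one of the two (already forced) real edges lies in the tile's unique opposite pair of exits and can be declared the required exit. Your explicit case check for the degree-2 tiles just fills in a step the paper states without detail.
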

\begin{proof}
  In the proof of Theorem~\ref{thm:undirected spanning asymmetric},
  every degree-3 vertex conveniently has a forced side edge,
  which is equivalent to being a asymmetric undirected T-metacell.
  Degree-2 vertices require a bit more care, but are not an obstruction:
  after deciding how to orient T-metacells as described above,
  note that for each degree-2 vertex,
  at least one of its edges is a side edge of the T-metacell.
  So we can simply place a T-metacell with that side edge forced.
\end{proof}

\begin{corollary} \label{cor:tmixedforced}
  Finding Hamiltonian cycles on a rectangular grid of
  required-edge directed asymmetric T-metacells
  and required-edge undirected symmetric T-metacells is ASP-complete.
\end{corollary}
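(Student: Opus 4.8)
The plan is to reduce from finding Hamiltonian cycles on a directed max-degree-3 spanning subgraph of a rectangular grid graph, which is ASP-complete by Corollary~\ref{cor:directed spanning}; crucially, I will use the \emph{promise} form of this construction from Theorem~\ref{thm:promise spanning}, in which every undirected Hamiltonian cycle automatically respects all edge directions. The observation driving the whole proof is that the two gadget types available to us --- required-edge directed \emph{asymmetric} T-metacells and required-edge undirected \emph{symmetric} T-metacells --- together cover \emph{every} degree-3 grid vertex carrying a forced edge. Indeed, such a vertex has two collinear side edges and one perpendicular center edge, so its forced edge is either a side edge, making the cell asymmetric, or the center edge, making it symmetric. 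In the first case I place a directed asymmetric T-metacell carrying the vertex's directions, and in the second I place an undirected symmetric T-metacell, forgetting its directions.

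Next I would handle the degree-2 vertices exactly as in the framework of Section~\ref{sec:T-metacells} and in the proofs of Corollaries~\ref{cor:tdirforced} and~\ref{cor:tundirforced}: represent each by a T-metacell whose third (extra) exit faces a non-exit of the neighboring cell, so the loop is forced through the two genuine exits. I then check that such a cell can always be realized as a required-edge directed asymmetric T-metacell: whichever of the two admissible placements the complement-graph orientation dictates for the extra exit, exactly one of the two genuine exits is a side exit of the resulting T-metacell, and forcing that side exit yields an asymmetric cell, which I direct according to the source. (A straight degree-2 vertex \emph{must} be handled this way, since its only available center exit is the blocked one; this is precisely why the reduction needs the source to be directed.)

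For correctness and parsimony, I would argue that forgetting directions at the symmetric cells creates no spurious solutions. A solution of the resulting puzzle is a single loop visiting every cell, carrying one globally consistent orientation that agrees with the directions of every directed (asymmetric) cell it passes through --- and, being Hamiltonian, it passes through all of them. Viewed as an undirected Hamiltonian cycle of the source graph, it respects every edge direction by the promise of Theorem~\ref{thm:promise spanning}; hence it is in fact a directed Hamiltonian cycle of the source, and conversely every directed Hamiltonian cycle of the source is clearly such a puzzle solution. This identity of edge sets, composed with the bijection to solutions of the underlying TRVB instance from Lemma~\ref{lem:connections}, gives the required parsimonious reduction, so the problem is ASP-complete.

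The main obstacle is the interaction between the directed and undirected cells: because directed-symmetric and undirected-asymmetric T-metacells are \emph{not} available, the directed/undirected choice is forced by each vertex's geometry, and I must be sure this never leaves a vertex unrealizable and never lets an undirected symmetric cell admit a wrongly-oriented loop. Both points are resolved by the two facts highlighted above --- that the asymmetric/symmetric dichotomy exactly matches the two available gadgets, and that the promise property of Theorem~\ref{thm:promise spanning} pins the loop's global orientation regardless of which cells are left undirected --- so the remaining work is the routine gadget placement already exercised in the preceding corollaries.
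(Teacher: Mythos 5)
Your proposal is correct and follows essentially the same route as the paper: reduce from the promise problem of Theorem~\ref{thm:promise spanning}, realize each degree-3 vertex as a directed asymmetric or undirected symmetric T-metacell according to whether its forced edge is a side or center edge (with degree-2 vertices handled as in the earlier corollaries), and invoke the promise to conclude that undirected solutions coincide with directed Hamiltonian cycles of the source. The paper's proof is just a terser statement of the same argument.
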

\begin{proof}
  We reduce from the promise problem of finding a Hamiltonian cycle of a directed max-degree-3 spanning subgraph of a rectangular grid graph,
  with the promise that every undirected Hamiltonian cycle is a valid directed Hamiltonian cycle (Theorem~\ref{thm:promise spanning}).
  We perform the same replacement of vertices with T-metacells as in Corollary~\ref{cor:tdirforced}, except that the symmetric T-metacells are undirected.
  We claim that Hamiltonian cycles of the original graph are in bijection with solutions to the T-metacell instance.
  A directed Hamiltonian cycle of the original graph clearly solves the T-metacell instance, since it correctly passes through the directions on the directed T-metacells.
  On the other hand, a solution to the T-metacell instance is necessarily an undirected Hamiltonian cycle of the original graph;
  and by the promise, directed Hamiltonian cycles and undirected Hamiltonian cycles are the same.
\end{proof}

\section{Applications}
\label{sec:Applications}

In this section we apply our improved T-metacell framework to a variety of pencil-and-paper logic puzzles implemented by the online puzzle-solving interface ``puzz.link''
\cite{puzzlink}.
This web resource implements more than 240 different logic puzzles.
It includes most genres published by the Japanese publisher Nikoli,
whose puzzles have a long history of analysis from a computational complexity perspective
\cite{Seta02thecomplexities,Yato-Seta-2003,Andersson-2009,Uejima-pipelink,maarse,tang},
as well as many others in a similar style.

We improve existing NP-hardness results for pencil-and-paper logic puzzles to ASP-completeness, and give new ASP-completeness results.
Many of the ASP-completeness proofs consist of just a single T-metacell,
demonstrating the ease of applying the framework for proving ASP-completeness.
The main additional requirement when designing a T-metacell gadget for ASP-completeness proofs is that it be ``parsimonious'': for each pair of exits, there must be a \emph{unique} local solution where the loop passes through those exits.

\subsection{Loop-Drawing Paper-and-Pencil Logic Puzzles}
\newcommand\puzdef[1]{\defn{#1}}
\newcommand\puzpng[1]{\begin{center}\includegraphics[scale=0.25]{tcells/#1.png}\end{center}}
\newcommand\puzpdfs[2][]{\includegraphics[scale=0.5,#1]{puzzlink/#2.pdf}}
\newcommand\puzpdf[2][]{\begin{center}\puzpdfs[#1]{#2}\end{center}}
\newcommand\tc[1]{%
We construct a T-metacell to show ASP-completeness of #1
by Corollary~\ref{cor:tundir}.%
}
\newcommand\tcf[1]{%
We construct a set of undirected forced-edge T-metacells
to show ASP-completeness of #1
by Corollary~\ref{cor:tundirforced}.%
}
\newcommand\notc[1]{%
We can embed any max-degree-3 spanning subgraph
of a rectangular grid graph
in #1
to get ASP-completeness directly from Theorem~\ref{thm:undirected spanning asymmetric}.
}
\newcommand\ez{%
There is a straightforward reduction from Hamiltonicity
in undirected grid graphs,
which by Theorem~\ref{thm:max deg 3 undir} is ASP-complete:
}

The rules of these logic puzzle genres
share many commonalities.
In order to avoid repetition in describing them,
we first define some terminology.

The vast majority of the puzzles we analyze are loop-drawing puzzles.
This is a natural setting to apply the T-metacell framework,
as drawing a cycle is already part of the rules.

\begin{itemize}

    \item
        All of the puzzles we analyze take place on a rectangular grid graph.
        The solver can draw \puzdef{segments} between centers of orthogonally adjacent cells.

    \item
        In a \puzdef{loop} genre,
        the goal is to draw a set of segments
        that form a single cycle.
        In geometric terms,
        they must form a simple closed curve.

    \item
        A \puzdef{full} loop genre
        is one in which the loop is required to visit every cell.

    \item
        A \puzdef{crossing} loop genre
        is one in which the loop may cross itself,
        violating the standard ``simple'' constraint.

    \item
        A \puzdef{directed} loop genre
        is one in which the loop is additionally given a direction.
        By default,
        loop puzzles are assumed to be undirected.

    \item
        Considering each individual cell
        and how it connects to its neighbors,
        there are three possibilities up to rotation:
        it is either a \puzdef{turn}
        (if it connects to one vertically adjacent cell and one horizontally adjacent cell),
        goes \puzdef{straight}
        (if it connects to two cells on opposite sides),
        or is \puzdef{unused}.

    \item
        We can also decompose the loop into \puzdef{lines},
        which are contiguous runs of segments in the same direction.
        A loop always alternates between horizontal and vertical lines.
        The length of a line is the number of segments it contains.

    \item
        Some puzzles divide the grid into a set of \puzdef{regions}.
        In these puzzles,
        each region is a set of orthogonally connected cells,
        and the set of regions is a partition of the set of cells in the grid.

\end{itemize}

Although most of our results apply to loop genres,
the methods also work for some other classes of puzzles:

\begin{itemize}

    \item
        In a \puzdef{path} genre,
        the goal is to draw a set of segments that trace a single path.
        The start and end points of the path may be given,
        or they may be to be determined by the solver.
        All considerations that apply to loop puzzles
        (full, crossing, etc.)
        also apply to path puzzles.

        The framework applies fairly directly to path puzzles,
        since it is almost always easy to construct a metacell
        that is forced to contain both endpoints of the path.
        This simulates the Hamiltonian cycle problem,
        as a path that starts in this metacell,
        visits every other metacell,
        and ends in the original metacell
        is the same as a cycle at the metacell level.

    \item
        In a \puzdef{shading} genre,
        the goal is to mark some subset of the grid cells as shaded
        to satisfy some set of constraints.

    \item
        The constraint found in shading genres that allows us to apply our framework
        is \puzdef{connectivity},
        which says that some set of cells
        (typically all of the shaded cells)
        must form a single connected component.

        If we can enforce that the shaded cells leave each T-metacell
        by at most two exits via other constraints,
        the set of shaded cells simulates a path,
        and our framework thus applies as described above.

    \item
        Some genres do not fit into an overarching archetype
        like loop, path, or shading.
        We refer to such puzzles as \puzdef{variety} puzzles,
        which occasionally include a set of constraints conducive to our framework.

\end{itemize}

\subsection{Prior ASP-Completeness Results}

Some of the T-metacell gadgets in \cite{tang} are already parsimonious,
and thus automatically serve as ASP-completeness proofs for their respective genres
given our new results.
These genres are
Slalom,
Onsen-meguri,
Mejilink,
Detour,
Tapa-Like Loop,
Kouchoku,
and Icelom.

For some other puzzle genres,
ASP-completeness proofs exist outside of the T-metacell framework.
Yato and Seta
\cite{Yato-Seta-2003}
prove that Slitherlink is ASP-complete
in the paper originally defining the ASP class.
Uejima, Suzuki, and Okada
\cite{Uejima-pipelink}
prove that Pipelink is ASP-complete,
which also proves ASP-completeness of the generalized versions
Pipelink Returns and Loop Special.

\subsection{Prior NP-Hardness Improved to ASP-completeness}

Most of the gadgets in this section consist of minor adjustments
to existing T-metacells in \cite{tang} to ensure parsimony.

\subsubsection{Masyu}

Masyu \cite{Friedman-masyu,tang} is a loop genre.
Some cells contain pearls,
which can be either black or white.
In any cell with a black pearl,
the loop must turn,
and it must go straight through both of the two adjacent cells.
In any cell with a white pearl,
the loop must go straight,
and it must turn in at least one of the two adjacent cells.

\tc{Masyu}
\puzpdf{masyu}

\subsubsection{Yajilin}

Yajilin \cite{Ishibashi,tang} is a loop genre.
Some cells contain numbered arrow clues,
which count the number of unused cells from the clue to the edge of the grid
(not including the clue itself).
The loop cannot go through the clues,
and of the remaining cells,
unused cells may not be orthogonally adjacent.

\ez
start with the bounding box of the grid graph as a rectangular grid graph,
place a ``0'' clue pointing in an arbitrary direction in every cell excluded from the original graph,
and then add a row above the grid entirely filled with ``0'' clues pointing down,
forcing every other cell to be visited.

\subsubsection{Nagareru}

Nagareru (a.k.a.\ Nagareru-Loop)
\cite{Iwamoto-moon-nagare-nuri,tang} is a directed loop genre.
Shaded cells cannot be visited.
Some unshaded cells contain arrows,
which must contain a straight segment and indicate the direction of the loop along that segment.
Some shaded cells contain arrows,
which exert a ``wind'' on all unshaded cells in the direction of the arrow
up to the next shaded cell.
Whenever the loop enters a cell experiencing wind,
it must immediately turn in the direction of the wind.

We construct a set of undirected forced-edge T-metacells to show ASP-completeness of Nagareru by Corollary~\ref{cor:tundirforced}.
Note that the resulting construction has two solutions
for every solution of the original undirected Hamiltonicity problem --
we must also fix a global direction
by inserting an arrow in the center cell of one T-metacell.
\puzpng{nagare}

We note that we can alternatively construct a set of directed forced-edge T-metacells to show ASP-completeness by Corollary~\ref{cor:tdirforced}.
\puzpng{nagare2}

\subsubsection{Castle Wall}

Castle Wall \cite{tang} is a loop genre.
Some cells are shaded white, black, or gray.
Shaded cells cannot be visited.
White cells must be contained in the interior of the loop,
and black cells must be in the exterior of the loop.
Additionally,
some shaded cells have numbered arrows,
which count the number of segments in the direction of the arrow
up to the edge of the grid
with the same orientation as the arrow (vertical or horizontal).

\tcf{Castle Wall}
\puzpng{castle}

We note that we can alternatively construct a set of directed forced-edge T-metacells to show ASP-completeness by Corollary~\ref{cor:tdirforced}.
This works because if the global orientation of the loop is arbitrarily thought of as travelling clockwise,
it always ``sees'' white squares on its right and black squares on its left,
and these T-metacells thereby force its direction.
\puzpng{castle2}

\subsubsection{Moon or Sun}

Moon or Sun \cite{Iwamoto-moon-nagare-nuri,tang} is a loop genre with regions.
Some cells contain moons, and some cells contain suns.
The loop must visit each region exactly once,
and within each region must visit either all moons and no suns,
or all suns and no moons.
Furthermore,
the loop cannot visit the same symbol in two consecutively used regions,
so it must alternate between ``sun-regions'' and ``moon-regions.''

\ez
take the bounding box of the grid graph as a rectangular grid graph,
place a sun in every cell contained in the original graph,
place a moon in every other cell,
and finally replace any sun with a \(1\times1\) region containing a moon.

\subsubsection{Country Road}

Country Road \cite{Ishibashi,tang} is a loop genre with regions.
Some regions contain numbers,
which count the number of cells the loop passes through in that region.
Furthermore,
no pair of orthogonally adjacent cells in different regions can both be unused.

\tc{Country Road}
\puzpdf[scale=1.5]{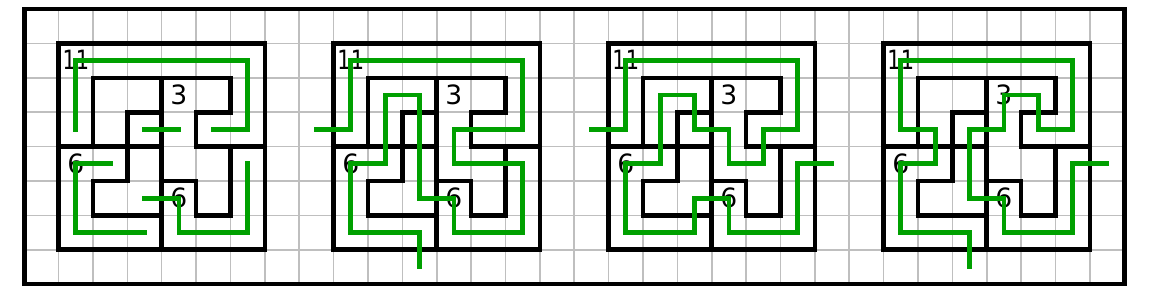}

\subsubsection{Geradeweg}

Geradeweg \cite{tang} is a loop genre.
Some cells contain numbers,
which count the length of all lines touching that cell.
All numbers must be visited.

\tc{Geradeweg}
\puzpdf[scale=1.5]{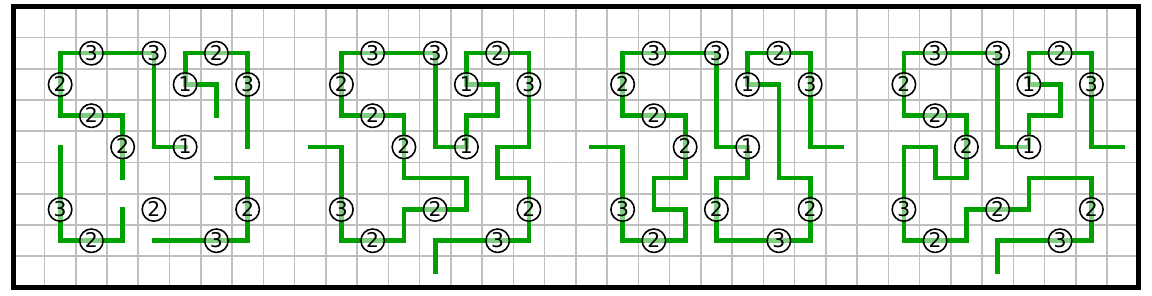}

\subsubsection{Maxi Loop}

Maxi Loop \cite{tang} is a full loop genre with regions.
Some regions contain numbers,
which give the number of cells occupied by the maximal set of contiguous segments
contained in that region.

\tc{Maxi Loop}
\puzpdf{maxi}

\subsubsection{Mid-loop}

Mid-loop \cite{tang} is a loop genre.
Dots can be placed on cell borders or at the centers of cells,
and every dot must mark the midpoint of some line in the loop.

\tc{Mid-loop}
\puzpdf{midloop}

\subsubsection{Balance Loop}

Balance Loop \cite{tang} is a loop genre.
Some cells contain either black or white circles.
All circles must be visited.
Circles give information about the two lines emanating from the circle,
in the direction of each incident segment
(which may be both-horizontal or both-vertical).
For a white circle,
their lengths must be the same,
and for a black circle,
their lengths must be different.
Additionally,
if the circle has a number,
it gives the sum of the two lengths.

\tc{Balance Loop}
\puzpdf{balance}

\subsubsection{Simple Loop}

Simple Loop \cite{undir_rect,tang} is a loop genre.
Some cells are shaded and cannot be visited,
but all other cells must be visited.

Simple Loop is directly ASP-complete from Theorem~\ref{thm:max deg 3 undir}.

\subsubsection{Haisu}

Haisu \cite{tang-old,tang} is a path genre with regions.
Some cells contain numbers.
The first time the path visits a region,
it must visit all the 1s;
the second time the path visits a region,
it must visit all the 2s;
and so on.

\tc{Haisu}
Note that we must break the loop into a path
by e.g.\ picking an arbitrary T-metacell,
replacing its center cell with an S,
and replacing the cell to the right with a G.
\puzpdf{haisu}

\subsubsection{Reflect Link}

Reflect Link \cite{tang} is a crossing loop genre.
Some cells contain mirrors,
which must contain a turn in the depicted orientation.
Some mirrors contain numbers,
which are 1 greater than the sum of the two incident lines
(in other words,
they count the total number of cells occupied by the two incident lines).
The loop can only cross on marked crossings.

\tc{Reflect Link}
\puzpdf{reflect}

\subsubsection{Linesweeper}

Linesweeper \cite{maarse} is a loop genre.
Some cells contain numbers,
which count the total number of cells the loop visits
among the 8 orthogonally and diagonally adjacent cells.
Numbers must not be visited.

\tc{Linesweeper}
Note that puzz.link does not contain a Linesweeper implementation,
so it was not analyzed in Tang's paper;
the known NP-completeness result
was obtained independently
\cite{maarse}.
\puzpng{linesweeper}

\subsection{New NP- and ASP-Completeness Results}

\subsubsection{Vertex/Touch Slitherlink}

Vertex Slitherlink and Touch Slitherlink are loop genres.
The presentation differs slightly from most other loop genres
in that lines are drawn between dots
instead of between centers of cells.
Clues refer to the four surrounding vertices:
for Vertex Slitherlink,
they count the number of vertices visited by the loop,
and for Touch Slitherlink,
they count the number of distinct times the loop visits any of the four surrounding vertices
(where a segment between two of them does not count as a separate visit).

\tc{both versions}
\puzpdf{vslither}

\subsubsection{Dotchi-Loop}

Dotchi-Loop is a loop genre with regions.
Some cells contain black circles,
which mean they cannot be visited.
Some cells contain white circles,
which must be visited.
Additionally,
within each region,
the shape of the loop
(whether it is a turn or goes straight)
must be the same on all white circles.

\ez
take the bounding box of the grid graph as a rectangular grid graph,
place a \(1\times1\) region with a white circle in every cell contained in the original graph,
and place a black circle in every other cell.

\subsubsection{Ovotovata}

Ovotovata is a loop genre with regions.
Some regions contain numbers,
which for every time the loop exits that region in any direction
count the number of additional cells it travels.
Additionally,
some regions are shaded,
which means they must be visited.

\ez
take the bounding box of the grid graph as a rectangular grid graph,
place a \(1\times1\) shaded region in every cell contained in the original graph,
and place an unshaded region containing a number larger than the size of the grid in every other cell.

\subsubsection{Building Walk}

Building Walk is a path genre.
Some cells are shaded, representing elevators;
and some unshaded cells have numbers: every number and elevator must be visited.
Numbers on unshaded cells indicate which ``floor'' the path must be on when it reaches that number.
Whenever the path reaches an elevator, it must change floors, and it cannot change floors except at elevators.
Elevators may be marked with an arrow indicating whether the floor increases or decreases at that elevator.
The path cannot go below the 1st floor or above the $n$th floor, where $n$ is the maximum number on the grid.
The start and end of the path are designated by `S' and `G' on unshaded cells, which may also have a number indicating which floor the path starts or ends on.

We construct a set of $5 \times 5$ required-edge directed asymmetric T-metacells and a required-edge undirected symmetric T-metacell to show ASP-completeness of Building Walk (Corollary~\ref{cor:tmixedforced}).
\begin{center}
  \puzpdfs{bdwalk1}
  \puzpdfs{bdwalk2}
\end{center}

For now we assume each number appears in at most one T-metacell, so that edges cannot be drawn between numbered cells in different T-metacells.  The left diagram shows a directed asymmetric T-metacell, which has a required edge on the left and cannot exit through the bottom.  Its orientation may be reversed by inverting the arrows on the elevators and applying the involution $x \mapsto 6 - x$ to its numbered cells.
The right diagram shows a required-edge undirected symmetric T-metacell.

There is one technical issue with these constructions, which is that two adjacent T-metacells whose required edges both point toward each other must share a number where the required edges connect.

If the two T-metacells' required-edge paths (i.e. the paths labeled ``1'' in the above diagrams) turn in opposite directions,
then we can just label the two paths with the same number.
However, if they turn towards the same direction,
then this might allow for unintended solutions.

The following diagram shows how to resolve the situation in this case.
It shows the boundary between two T-metacells whose required-edge paths (labeled ``1'' and ``6'') turn towards the same direction.
In this case, replacing two cells of the ``6'' path with a ``1'' and an elevator as shown forces the paths to connect properly without allowing extra solutions.

\begin{center}
  \puzpdfs{bdwalk3}
\end{center}

Finally we must also break the loop into a path; this can easily be done by picking any two unshaded cells which are forced to be adjacent in the loop,
and labeling one with `S' and one with `G'.

\subsubsection{Rail Pool}

Rail Pool is a full loop genre with regions.
For each region,
consider all lines that overlap any cell of that region,
and take the set of their lengths.
This set must match the set of numbers in the region,
ignoring duplicates.
Unnumbered regions are unconstrained.

\tcf{Rail Pool}
\begin{center}\puzpdfs{railpool1} \puzpdfs{railpool2}\end{center}

\subsubsection{Disorderly Loop}

Disorderly Loop is a loop genre.
Each clue contains a multiset of numbers and points to a cell.
The clue cannot be visited,
the cell pointed to must contain a turn,
and the multiset contains the lengths of the next \(n\) lines
along the loop in the direction of the arrow.

\tc{Disorderly Loop}
\puzpdf{disloop}

\subsubsection{Ant Mill}

Ant Mill is a shading genre.
The goal is to draw a loop composed of dominoes of shaded cells,
connected by diagonal adjacency.
Some edges have squares,
which indicate that the two incident cells are either both shaded or both unshaded.
Some edges have X marks,
which indicate that exactly one of the two incident cells is shaded.

\tc{Ant Mill}
\puzpdf{antmill}

\subsubsection{Koburin}

Koburin is a loop genre.
Some cells contain numbers,
which count the number of orthogonally adjacent unused cells.
Numbers cannot be visited,
and unused cells cannot be orthogonally adjacent to each other.

\ez
take the bounding box of the grid graph as a rectangular grid graph,
and place a ``0'' clue in every cell excluded from the original graph.
Since the original graph has max degree 3,
every cell is adjacent to at least one unused cell,
and the clues therefore force every other cell to be visited.

\subsubsection{Mukkonn Enn}

Mukkonn Enn is a full loop genre.
Some cells contain numbers in one of their four quadrants,
which counts the length of the line extending from the cell in that direction
if a segment in that direction is present.
(If no such segment is present,
the number can be ignored.)

\tc{Mukkonn Enn}
\puzpdf{mukkonn}

\subsubsection{Rassi Silai}

Rassi Silai is a variety genre with regions.
Each region must contain exactly one path,
which visits all cells in that region.
Additionally,
no two endpoints of lines can be orthogonally or diagonally adjacent.%
\footnote{
    The standard rules of Rassi Silai also allow the grid to contain shaded cells,
    which cannot be visited,
    but this makes the reduction trivial from Theorem~\ref{thm:max deg 3 undir}.
}

\tcf{Rassi Silai}
Note that this T-metacell cannot be reflected,
as they rely on the borders present in adjacent cells,
so the first three images depict the three possible forced edges.
We must break the loop into a path
by replacing the top left cell with the fourth image shown.
\begin{center}
    \puzpdfs{rassi1}
    \puzpdfs{rassi2}
    \puzpdfs{rassi3}
    \puzpdfs{rassi4}
\end{center}

\subsubsection{(Crossing) Ichimaga}

Ichimaga is a variety genre.
Some gridpoints contain circled numbers,
which count the number of adjacent segments used.
All drawn segments must form lines that connect two circles
and turn at most once.
Additionally,
the resulting graph of circles joined by lines
must be connected.
The Crossing Ichimaga variant allows these connecting lines to cross.

\tc{both versions}
Note that due to the global structure of the construction,
it is never possible for lines to turn or cross.
\begin{center}\puzpdfs{ichimaga1} \puzpdfs{ichimaga2}\end{center}

\subsubsection{Tapa}

Tapa is a shading genre.
Some cells contain multisets of numbers,
which must match the multiset of lengths of contiguous runs of shaded cells
in the 8 surrounding cells.
Additionally,
the shaded cells must be connected,
and \(2\times2\) squares of shaded cells are forbidden.

\tc{Tapa}
Note that we must break the loop into a path
by replacing the top left cell with the second image shown.
\begin{center}\puzpdfs{tapa1} \puzpdfs{tapa2}\end{center}

\subsubsection{Canal View}

Canal View is a shading genre.
Some cells contain numbers,
which the sum of lengths of runs of shaded cells
in all 4 orthogonal directions.
Additionally,
the shaded cells must be connected,
and \(2\times2\) squares of shaded cells are forbidden.

\notc{Canal View}
First,
expand every cell into the first image shown
to embed a rectangular grid graph.
Then,
remove the appropriate edges from each metacell
by placing ``0'' clues in any of the 4 cells orthogonally adjacent to the ``13'' clue.
Finally,
replace the top left cell with the second image shown
to break the loop into a path.
\begin{center}\puzpdfs{canal1} \puzpdfs{canal2}\end{center}

\subsubsection{Aqre}

Aqre is a shading genre with regions.
Some regions contain numbers,
which count the number of shaded cells in that region.
Additionally,
the shaded cells must be connected,
and no horizontal or vertical line of 4 cells in a row
can be all shaded or all unshaded.

\notc{Aqre}
First,
expand every cell into the first image shown
to embed a rectangular grid graph.
Then,
remove the appropriate edges from each metacell
by placing ``0'' clues in any of the 4 undetermined cells.
Finally,
replace the top left cell with the second image shown
to break the loop into a path.
\begin{center}\puzpdfs{aqre1} \puzpdfs{aqre2}\end{center}

\subsubsection{Paintarea}

Paintarea is a shading genre.
The grid is divided into regions, each of which must be either entirely
shaded or entirely unshaded.
The shaded cells must be connected, and no $2\times2$ square may be entirely shaded or entirely unshaded.
Numbered cells indicate the number of orthogonally adjacent shaded cells.

We construct a required-edge undirected asymmetric T-metacell to show ASP-completeness of Paintarea without any numbered cells (Corollary~\ref{cor:tundirforced}).

The base for our T-metacell is this $15 \times 15$ outer frame,
which ensures that shaded cells can connect only in the middle of their shared edges.

\puzpng{paintarea2}

The center of the frame is then filled with the following $7 \times 7$ core, which can be individually rotated and reflected (while the outer frame is fixed) to produce a T-metacell with the desired orientation.

\puzpng{paintarea1}

We also need to break the loop into a path, which is accomplished by using this tile for the top-right corner:

\puzpng{paintarea3}

\subsection{Open ASP-Completeness Questions}

Some puzzle genres were proved NP-complete by Tang,
but we have not yet found parsimonious adaptations of the corresponding T-metacells.
These genres are
Angle Loop,
Double Back,
Scrin,
Icebarn,
and Icelom 2.

\section*{Acknowledgments}

This paper was initiated during open problem solving in the MIT class
on Algorithmic Lower Bounds: Fun with Hardness Proofs (6.5440)
taught by Erik Demaine in Fall 2023.
We thank the other participants of that class for helpful discussions
and providing an inspiring atmosphere.

Some figures were drawn using SVG Tiler [\url{https://github.com/edemaine/svgtiler}],
and some were drawn using puzz.link \cite{puzzlink}.

\bibliography{gridgraphs}
\bibliographystyle{alpha}

\end{document}